\DeclareMathOperator*{\argmin}{arg\ min}
\DeclarePairedDelimiter\abs{\lvert}{\rvert}%
\DeclarePairedDelimiter\norm{\lVert}{\rVert}%
\theoremstyle{definition}
\newtheorem{definition}{Definition}[section]
\newtheorem{theorem}{Theorem}
\newtheorem{lemma}[theorem]{Lemma}
\newcommand{\cmark}{\text{\ding{52}}}
\newcommand{\xmark}{\text{\ding{56}}}
\newcommand{\socialtprdecrease}{$5.4\%$}
\newcommand{\socialfprincrease}{$0.1\%$}
\newcommand*{\red}[1]{\textcolor{red}{#1}}
\keywords{Verifiable Machine Learning; Security Classifier; Adversarial machine learning; Global Robustness Properties; Formal Verification}
\begin{document}
\fancyhead{}

\title{Learning Security Classifiers with \\ Verified Global Robustness Properties}

\author{Yizheng Chen}
\affiliation{%
  \institution{UC Berkeley}
  \country{}
}
\email{1z@berkeley.edu}

\author{Shiqi Wang}
\affiliation{%
  \institution{Columbia University}
  \country{}
}
\email{tcwangshiqi@cs.columbia.edu}

\author{Yue Qin}
\affiliation{%
  \institution{Indiana University Bloomington}
  \country{}
}
\email{qinyue@iu.edu}

\author{Xiaojing Liao}
\affiliation{%
  \institution{Indiana University Bloomington}
  \country{}
}
\email{xliao@iu.edu}

\author{Suman Jana}
\affiliation{%
  \institution{Columbia University}
  \country{}
}
\email{suman@cs.columbia.edu}

\author{David Wagner}
\affiliation{%
  \institution{UC Berkeley}
  \country{}
}
\email{daw@cs.berkeley.edu}

\begin{abstract}

Many recent works have proposed methods to train classifiers with local robustness properties,
which can provably eliminate classes of evasion attacks for most inputs, but not all inputs.
Since data distribution shift is very common in security applications, e.g., often observed
for malware detection, local robustness cannot guarantee that the property holds for
unseen inputs at the time of deploying the classifier. Therefore, 
it is more desirable to enforce global robustness properties
that hold for all inputs, which is strictly stronger
than local robustness.

In this paper, we present a framework and tools for training classifiers that satisfy
global robustness properties. We define new notions of global robustness
that are more suitable for security classifiers.
We design a novel booster-fixer training framework to enforce
global robustness properties. We structure our classifier as an ensemble of logic
rules and design a new verifier to verify the properties. In our training algorithm,
the booster increases the classifier's capacity,
and the fixer enforces verified global robustness properties following
counterexample guided inductive synthesis.

We show that we can train classifiers to satisfy different global robustness properties for
three security datasets, and even multiple properties at the same time, with modest impact
on the classifier's performance. For example, we train a Twitter spam account classifier
to satisfy five global robustness properties, with \socialtprdecrease{} decrease in true positive rate,
and \socialfprincrease{} increase in false positive rate, compared to a baseline XGBoost model
that doesn't satisfy any property.

\end{abstract}

\maketitle

\section{Introduction}

Machine learning classifiers can achieve high accuracy to detect malware,
spam, phishing, online fraud, etc., but they are brittle against evasion attacks.
For example, to detect whether a Twitter account is
spamming malicious URLs, many research works have proposed to use
content-based features, such as the number of tweets containing URLs from that account~\cite{lee2010uncovering,lee2013warningbird,ma2009identifying,
thomas2011design}. These features are useful at achieving high accuracy,
but attackers can easily modify their behavior to evade the classifier.

In this paper, we develop a framework and tools for addressing this problem.
First, the defenders identify a property $\varphi$ that the classifier should satisfy;
typically, $\varphi$ identifies a class of evasion strategies that might be available to an adversary, and specifies a requirement on their effect on the classifier (e.g., that they won't change the classifier's output too much).
Next, the defenders train a classifier $\mathcal{F}$ that satisfies $\varphi$.
We identify several properties $\varphi$ that capture different notions of classifier robustness.
Then, we design an algorithm for the classifier design problem:
\begin{quote}
    \emph{Given a property $\varphi$ and a training set $\mathcal{D}$, train a classifier $\mathcal{F}$ that satisfies $\varphi$.}
\end{quote}
Our algorithm trains a verifiably robust classifier: we can formally verify that $\mathcal{F}$ satisfies $\varphi$.




Existing works focus on training and verifying local robustness properties of classifiers.
Typically, they verify the following local robustness property:
define $\varphi(x)$ to be the assertion that for all $x'$, if $\norm{x'-x}_p \leq \epsilon$, then $\mathcal{F}(x')$ is classified the same as $\mathcal{F}(x)$.
The past works provide a way to verify whether $\varphi(x)$ holds, for a fixed $x$, and then devise ways to train a classifier $\mathcal{F}$ so that $\varphi(x)$ holds for most $x$, but not all $x$ (local robustness).
So far, the most promising defenses against adversarial examples
all take this form, including adversarial training~\cite{madry2017towards},
certifiable training~\cite{dvijotham2018training,mirman2018differentiable,wong2018provable,wang2018mixtrain}, and randomized smoothing techniques~\cite{li2018second,lecuyer2019certified,cohen2019certified}.
To evaluate the local robustness of a trained model, we can
measure the percentage of data points $x$ in the test set that satisfy $\varphi(x)$~\cite{lomuscio2017approach,katz2017reluplex,huang2017safety,ehlers2017formal,fischetti2017deep,dutta2018output,raghunathan2018certified,weng2018towards,gehrai,reluval2018,singh2019boosting,singh2019abstract, shiqi2018efficient,wong2018provable,raghunathan2018semidefinite,tjeng2017evaluating,dvijotham2018dual}.
This measurement is useful if at the time of deploying the classifier,
the real-world data follow the same distribution as the measured test set.
Unfortunately, data distribution shift is very common in security applications (e.g., malware detection~\cite{pendlebury2019tesseract,allix2015your,miller2016reviewer}), so local robustness cannot guarantee
that the robustness property holds for most inputs at deployment time.
In fact, in some cases, the adversary may be able to adapt their behavior by choosing novel samples $x$ that don't follow the test distribution and such that $\varphi(x)$ doesn't hold.

In this paper, we address these challenges by showing how to train classifiers
that satisfy verified global robustness properties.
We define a global robustness property as a universally quantified
statement over one or more inputs to the classifier, and its corresponding outputs: e.g., $\forall x . \varphi(x)$ or $\forall x,x' . \varphi(x,x')$.
Since global robustness holds for all inputs,
it is strictly stronger than local robustness, and it ensures robustness even under distribution shift.
Then, we identify a number of robustness properties that may be useful in security applications, and we develop a general technique to achieve the properties. Our technique can handle
a large class of properties, formally defined in Section~\ref{sec:Supported Properties}.
The vast majority of past work has focused on $\ell_p$ robustness, perhaps motivated by computer vision; however, in security settings, such as detecting malware, online fraud, or other attacks, other notions of robustness may be more appropriate.

There are many challenges in training classifiers with global robustness properties.
First, it is hard to maintain good test accuracy since the definition of
global robustness is much stronger than local robustness.
To the best of our knowledge, among global robustness properties,
only two properties have been previously achieved.
One of them is monotonicity ~\cite{incer2018adversarially,wehenkel2019unconstrained}; and the other is a concurrent work that has proposed $\ell_p$-norm robustness with the option to abstain on non-robust inputs~\cite{leino2021globally}.
For example, researchers have trained a monotonic malware classifier
to defend against evasion attacks that add content to a malware~\cite{incer2018adversarially}.
Monotonicity is useful: it limits the attacker to more expensive evasion operations that may remove malicious functionality from the malware, if they want to evade the classifier.
However, monotonicity is not general enough to capture some types of evasion.
Second, it is challenging to train classifiers with guarantees of global robustness.
Several training techniques sacrifice global robustness
in their algorithms.
For example, DL2~\cite{fischer2018dl2}  proposed several
global robustness properties, but
their training technique only achieves local robustness and cannot learn classifiers with global properties, because they rely on adversarial training.
ART~\cite{lin2020art} presents an abstraction refinement
method to train neural networks with global robustness properties.
In principle ART can guarantee global robustness if the correctness loss
reaches zero, however in their experiments the loss never reached zero.

To overcome these challenges,
we design a novel booster-fixer training framework that enforces global robustness.
Our classifier is structured as an ensemble of logic rules---a new architecture that is
more expressive than trees given the same number of atoms and clauses
(formally defined in Section~\ref{sec:Logic Ensemble Classifier})---and
we show how to verify global robustness properties and then how to train them, for these ensembles.
Intuitively, our algorithm trains a candidate classifier with good accuracy (but not necessarily any robustness),
and then we fix the classifier to satisfy global robustness
by iteratively finding counterexamples and repairing them using the Counterexample Guided Inductive Synthesis (CEGIS) paradigm~\cite{solar2006combinatorial}.
Past works for training monotonic classifiers all use specialized
techniques that do not generalize to other properties~\cite{wehenkel2019unconstrained,incer2018adversarially,archer1993application,daniels2010monotone,kay2000estimating,ben1995monotonicity,duivesteijn2008nearest,feelders2010monotone,gupta2016monotonic}.
In contrast, our technique is fully general and can handle a large class of global robustness properties (formally defined in Section~\ref{sec:Supported Properties}); we even show that we can enforce multiple properties
at the same time (Section~\ref{section:Global Robustness Properties}, Section~\ref{sec:Global Robustness Property Evaluation}).

We evaluate our approach on three security datasets: 
cryptojacking~\cite{kharraz2019outguard}, Twitter spam accounts~\cite{lee2011seven}, and Twitter spam
URLs~\cite{kwon2017domain}. Using security domain knowledge and results from measurement studies,
we specify desirable global robustness properties for each classification task.
We show that we can train all properties individually, and
we can even enforce multiple properties at the same time, with a modest impact on the classifier's performance.
For example, we train a classifier to detect Twitter spam accounts while satisfying five global robustness properties; the true positive rate decreases by \socialtprdecrease{} and the false positive rate increases by \socialfprincrease{},
compared to a baseline XGBoost model that doesn't satisfy any robustness property.


Since no existing work can train classifiers with any global robustness property other than monotonicity,
we compare our approach against two types of baseline models:
1) monotonic classifiers, and 2) models trained with
local versions of our proposed properties.
For the monotonicity property, our results show that our method can achieve comparable or better model performance than prior methods that were specialized for monotonicity.
We also verify that we can enforce each global robustness property we consider, which no prior method achieves for any of the other properties.




Our contributions are summarized as follows.
\begin{itemize}
\item We define new global robustness properties
that are relevant for security applications.

\item We design and implement
a general booster-fixer training procedure to train classifiers with verified global robustness properties.

\item We propose a new type of model, logic ensemble, that is well-suited to booster-fixer training. We show how to verify properties of such a model.


\item We are the first to train multiple global robustness properties. We demonstrate that we
can enforce these properties while maintaining high test accuracy for
detecting cryptojacking websites, Twitter spam accounts, and Twitter spam URLs.
Our code is available at \url{https://github.com/surrealyz/verified-global-properties}.

\end{itemize}

\section{Example}

\begin{figure*}[h!]
    \centering
    \includegraphics[width=\textwidth]{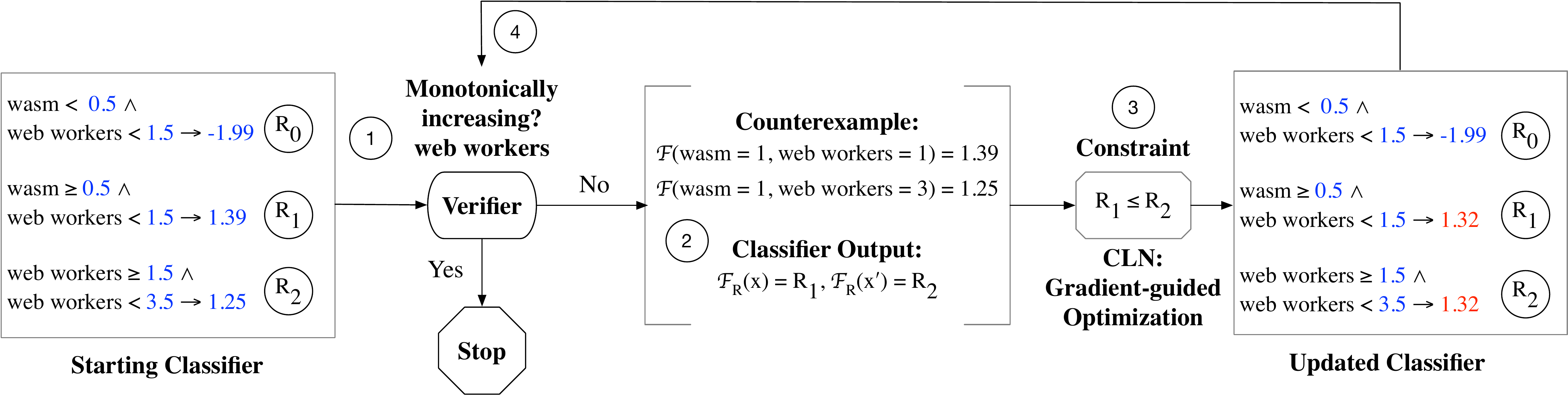}
    \caption{One CEGIS iteration of our training algorithm, illustrated on a simple example. Here we train a classifier to detect cryptojacking, while enforcing a monotonicity property.}
    \label{fig:monotonicity_demo}
\end{figure*}

In this section, we present an illustrative example to show how our training algorithm works.
Within our booster-fixer framework, the fixer follows the
Counterexample Guided Inductive Synthesis (CEGIS) paradigm. 
The key step in each CEGIS iteration is to start from a classifier without the
global robustness property, use a verifier to find counterexamples that violate the property,
and train the classifier for one epoch guided by the counterexample.
This process is repeated until the classifier satisfies the property.
Here, we show how to train one CEGIS iteration for a classifier to detect cryptojacking web pages.
For simplicity, we demonstrate classification using only two features.

\subsection{Monotonicity for Cryptojacking Classifier}
We use two features to detect cryptojacking: whether
the website uses WebAssembly (wasm), and the number of web workers used
by the website~\cite{kharraz2019outguard}. Cryptojacking websites
need the high performance provided by WebAssembly
and often use multiple web worker threads to mine cryptocurrency concurrently.
We enforce a monotonicity property for the web workers feature:
the more web workers a website uses, the more suspicious it should be rated by the classifier, with all else held equal.

Figure~\ref{fig:monotonicity_demo} shows a single CEGIS iteration that starts from
a classifier that violates the property, uses a counterexample to guide the training,
and arrives at an updated classifier that satisfies the property.
The classifier is structured as an ensemble of logic rules.
For example, ``$\texttt{wasm} < 0.5 \land \texttt{web workers} < 1.5 \rightarrow -1.99$"
means that if the website does not use WebAssembly, and has at most one web worker,
the clause adds $R\_0$ to the final prediction value, which is currently $-1.99$.
Otherwise, the clause is inactive and adds nothing to the final prediction value. The colored variables are learnable parameters.
The classifier computes the final score as a sum over all active clauses; if this score is greater than or equal to 0, the webpage is classified as malicious.

Our training procedure executes the following steps:

Step \textcircled{1}, Figure~\ref{fig:monotonicity_demo}: We use formal methods to verify whether the current classifier satisfies the monotonicity property
for the web workers feature. If the property is verified, we have learned a robust classifier
and the iteration stops. Otherwise, the verifier produces a counterexample that violates
the property.

Step \textcircled{2}: We use clause return variables to represent the counterexample.
The counterexample found by the verifier is $x = (\texttt{wasm} = 1, \texttt{web workers} = 1)$,
$x^\prime = (\texttt{wasm} = 1,$ $\texttt{web workers} = 3)$,
such that $x < x^\prime$ and $\mathcal{F}(x) > \mathcal{F}(x^\prime)$.
We compute variables
$\mathcal{F}_R(x) = R_1$, $\mathcal{F}_R(x^\prime) = R_2$
as the classifier output for each input, 
using the sum of return variables from the true clauses.


Step \textcircled{3}: We construct a logical constraint to represent that the counterexample from this pair of samples $(x,x')$ should no longer violate the monotonicity property, i.e., that $\mathcal{F}_R(x) \leq \mathcal{F}_R(x^\prime)$. 
Here, this is equivalent to $R_1 \leq R_2$.
Then, we re-train the classifier subject to the constraint that $R_1 \leq R_2$.
To enforce this constraint, we smooth the discrete classifier using Continuous Logic Networks (CLN)~\cite{ryan2019cln2inv, yao2020learning},
and then use projected gradient descent with the constraint to train the classifier.
Gradient-guided optimization ensures that this counterexample $(x,x')$ will no longer violate the property and tries to achieve the highest accuracy subject to that constraint.
After one epoch of training, the red parameters are changed by gradient descent
in the updated classifier.

Lastly, we discretize the updated classifier and repeat the process again.
In the second iteration, we query the verifier again  (Step \textcircled{4}).
In this example, the updated classifier from the first iteration satisfies the monotonicity property,
and the process stops.

This simplified example illustrates the key ideas
behind our training algorithm.
Appendix~\ref{appendix:Stability for Twitter Account Classifier} shows another example, illustrating that this process is general and can enforce a large class of properties on the classifier.
We define the properties we can support in Section~\ref{sec:Supported Properties}.

\section{Model Synthesis Problem}
\label{section:Model Synthesis Problem}

In this section, we formulate the model synthesis problem mathematically,
and then propose new global robustness properties based on security domain knowledge.

\subsection{Problem Formulation}
\label{section:Problem Formulation}

Our goal is to train a machine learning classifier $\mathcal{F}$
that satisfies a set of global robustness properties.
Without loss of generality, we focus on binary classification
in the problem definition; this can be extended to the multi-class scenario.
The classifier $\mathcal{F}_\theta: \mathbb{R}^n \rightarrow \mathbb{R}$
maps a feature vector $x = [x_1, x_2, ..., x_n]$ with $n$ features to a real number.
Here $\theta$ represent the trainable parameters of the classifier; we omit them from the notation when they are not relevant.
The classifier predicts $\hat{y} = 1$ if $\mathcal{F}(x) \geq 0$,
otherwise $\hat{y} = 0$.
We use $\mathcal{F}(x)$ to represent
the classification score, and
$g(\mathcal{F}(x))$ to denote the normalized
prediction probability for the positive class,
where $g: \mathbb{R} \rightarrow [0, 1]$.
For example, we can use sigmoid
as the normalized prediction function $g$.
We formally define the model synthesis problem here.

\begin{definition}[Model Synthesis Problem]
A model synthesis problem is a tuple $(\Phi, \mathcal{D})$, where
\begin{itemize}
\item $\Phi$ is a set of global robustness properties, $\Phi = \{ \varphi_1, \varphi_2, ..., \varphi_k\}$.
\item $\mathcal{D}$ is the training dataset containing $m$ training samples with their labels
${(x^{(1)}, y^{(1)}), \ldots, (x^{(m)}, y^{(m)})}$.
\end{itemize}
\end{definition}

\begin{definition}[Solution to Model Synthesis Problem]
A solution to the model synthesis problem $(\Phi, \mathcal{D})$
is a classifier $\mathcal{F}_\theta$ with weights $\theta$ that minimizes
a loss function $\mathcal{L}$ over the training set, subject to the requirement
that the classifier satisfies the global robustness properties $\Phi$.
\begin{align}
\begin{split}
\theta = \argmin_\theta \sum_{\mathcal{D}} \mathcal{L} (y, g(\mathcal{F}_\theta(x))) \\
\text{subject to }
\forall \varphi_i \in \Phi, \mathcal{F}_\theta \models \varphi_i
\end{split}
\end{align}

\end{definition}

In Section~\ref{sec:Training Algorithm},
we present a novel training algorithm to solve
the model synthesis problem.


\subsection{Global Robustness Property Definition}
\label{section:Global Robustness Properties}

We are interested in global robustness properties
that are relevant for security classifiers.
Below, we define five general properties that allow us to incorporate domain knowledge
about what is considered to be more suspicious, about what kinds of low-cost evasion strategies the attackers can use without expending too many resources, and about the semantics and dependency among features.


\textbf{Property 1 (Monotonicity):} Given a feature $j$,
\begin{equation}
\forall x, x' \in \mathbb{R}^n . [x_j \leq x'_j \land (\forall i \neq j . x_i = x'_i)] \implies
\mathcal{F}(x) \leq \mathcal{F}(x')
\label{eq:mono_increase}
\end{equation}


This property specifies that the classifier is monotonically increasing along
some feature dimension. It is useful to defend against a class of attacks
that insert benign features into malicious instances (e.g., mimicry attacks~\cite{wagner2002mimicry}, 
PDF content injection attacks~\cite{laskov2014practical},
gradient-guided insertion-only attacks~\cite{grosse2016adversarial},
Android app organ harvesting attacks~\cite{pierazzi2020problemspace}).
If we carefully choose features to be monotonic for a classifier,
injecting content into
a malicious instance can only make it look more malicious to the classifier (not less), i.e., these changes can only increase (not decrease) classification score. Therefore, evading the classifier will require the attacker to adopt more sophisticated
strategies, which may incur a higher cost to the attacker; also, in some settings, these strategies can potentially
disrupt the malicious functionality of the instance, rendering it harmless.

A straightforward variant is to require that the prediction score be monotonically decreasing (instead of increasing) for some features.
For example, we might specify that, all else being equal, the more followers a Twitter account has, the less likely it is to be malicious.
It is cheap for an attacker to obtain a fake account with fewer followers, but expensive to buy a fake account with many followers or to increase the number of followers on an existing account.
Therefore, by specifying that the prediction score should be monotonically decreasing in the number of followers, we force the attacker to spend more money
if they wish to evade the classifier by perturbing this feature.


\textbf{Property 2 (Stability):} Given a feature $j$ and a constant $c$,
\begin{equation}
\forall x, x' \in \mathbb{R}^n . [\forall i \neq j . x_i = x'_i] \implies \abs{\mathcal{F}(x) - \mathcal{F}(x')} \le c
\label{eq:stability}
\end{equation}

The stability property states that for all $x, x'$,
if they only differ in the $j$-th feature, the difference between their prediction scores is bounded by a constant $c$.
The stability constant $c$ is effectively a Lipschitz constant for dimension $j$ (when all other features are held fixed), when $x,x'$ are compared using the $L_0$ distance:
\begin{equation*}
\abs{\mathcal{F}(x) - \mathcal{F}(x')} \le c \norm{x_j - x'_j}_0
\end{equation*}

We can generalize the stability property definition to a subset
of features $J$ that can be arbitrarily perturbed by the attacker.
\begin{equation}
\forall x, x' \in \mathbb{R}^n . [\forall i \notin J . x_i = x'_i]
\implies \abs{\mathcal{F}(x) - \mathcal{F}(x')} \le c \norm{x - x'}_0
\label{eq:gen_stability}
\end{equation}

Researchers have shown that constraining the local Lipschitz constant
to be small when training neural networks can increase the robustness against
adversarial examples~\cite{hein2017formal,cisse2017parseval}.
However, existing training methods rely on regularization techniques
and thus achieve only local robustness; they cannot enforce a global Lipschitz constant.
We are interested in the $\ell_0$ distance,
because some low-cost features can be trivially perturbed
by the attacker to evade security classifiers: the attacker can replace the value of those features with any other desired value.
The stability property captures this by allowing the stable feature to be arbitrarily changed.

\textbf{Low-cost Features.}
Some features can have their values arbitrarily replaced without too much difficulty.
We dub these low-cost features, because it does not cost the attacker much to arbitrarily modify the value of these features.
In particular a low-cost feature is one that is trivial to change, i.e. does not require
nontrivial time, effort, and economic cost to perturb.
All other features are called high-cost.
Section~\ref{sec:Datasets and Property Specifications} gives a concrete analysis of which features are low-cost for three
security datasets.

\textbf{Property 3 (High Confidence):} Given a set of low-cost features $J$,
\begin{equation}
\forall x,x' \in \mathbb{R}^n . [\forall i \notin J . x_i = x'_i] \land g(\mathcal{F}(x)) \ge \delta \implies \mathcal{F}(x') \geq 0
\label{eq:high_confidence}
\end{equation}

The high confidence property states that, for any sample $x$ that is classified as malicious
with high confidence (e.g., $\delta = 0.98$), perturbing any low-cost
feature $j \in J$ does not change the classifier prediction from malicious
to benign. Many low-cost features in security applications are useful
to increase accuracy in the absence of evasion attacks, but they can be easily
changed by the attacker.
For example, to evade cryptojacking detection, an attacker could
use an alias of the hash function name, to evade the hash function feature.
This property allows such features to influence the classification if the sample is near the decision boundary, but for samples classified as malicious with high confidence, modifying just low-cost features should not be enough to evade the classifier.
Thus, samples detected with high confidence by the classifier will be immune to such low-cost evasion attacks.

\textbf{Property 3a (Maximum Score Decrease):}
Given a set of low-cost features $J$,
\begin{equation}
\forall x,x' \in \mathbb{R}^n . [\forall i \notin J . x_i = x'_i] \implies \mathcal{F}(x) - \mathcal{F}(x') \le g^{-1}(\delta)
\label{eq:max_decrease}
\end{equation}

Property 3a is stronger than Property 3.
If the maximum decrease of any classification score is bounded by $g^{-1}(\delta)$, then any high confidence classification
score does not drop below zero.
We provide the proof in Appendix~\ref{appendix:proof}.
In Section~\ref{sec:Robust Training Algorithm},
we design the training constraint for Property 3a
in order to train for Property 3 (Table~\ref{tab:training_constraints}).

\textbf{Property 4 (Redundancy):}
Given $M$ groups of low-cost features $J_1, J_2, \ldots, J_M$
\begin{align}
\begin{split}
    \forall x,x' \in \mathbb{R}^n . [\forall i \notin \bigcup_{m=1}^{M} J_m . x_i = x'_i] \land g(\mathcal{F}(x)) \ge \delta \\
    \land \lnot [\forall m = 1, \ldots, M, \exists j_m \in J_m, x_{j_m} \neq x'_{j_m}] \\
    \implies \mathcal{F}(x') \geq 0
\end{split}
\label{eq:redundancy}
\end{align}


If the attacker perturbs multiple low-cost features,
we would like the high confidence predictions from the classifier
to be robust if different groups of low-cost features are not
perturbed at the same time.
In the redundancy property, we identify $M$ groups of low-cost
features, and require that the attacker has to perturb
at least one feature from each group in order to
evade a high confidence prediction.
In other words, this makes each group of low-cost features
redundant of every other group. If we know all the high-cost features
with any one group of low-cost features, all high confidence
predictions are robust.

\textbf{Property 5 (Small Neighborhood):} Given a constant $c$,
\begin{equation}
    \forall x, x' \in \mathbb{R}^n . d(x,x') \leq \epsilon \implies \abs{\mathcal{F}(x) - \mathcal{F}(x')} \le c \cdot \epsilon
\label{eq:small_neighborhood}
\end{equation}
where $d(x,x') = \max_i \{\abs{x_i-x'_i}/\sigma_i \}$.

The small neighborhood property specifies that for any two data points
within a small neighborhood defined by $d$, we want
the classifier's output to be stable.
We define the neighborhood by a new distance metric $d(x, x')$ that measures
the largest change to any feature value, normalized by the standard deviation of that input feature.
$d(x,x')$ is essentially a $\ell_\infty$ norm, applied to normalized feature values.
We chose not to use the $\ell_\infty$
distance directly because different features for security classifiers often have a different scale.


\section{Property Verification}
In this section, we describe the key ingredients we need to solve the model synthesis problem.
We define a new type of classifier that is well-suited to model synthesis,
and a verification algorithm to verify whether the classifier satisfies the properties.

\subsection{Logic Ensemble Classifier}
\label{sec:Logic Ensemble Classifier}

We propose a new type of classification model, which we call a logic ensemble.
We show how to train logic ensemble classifiers that satisfy global robustness properties.

\begin{definition}[Logic Ensemble Definition]
A logic ensemble classifier $\mathcal{F}$ consists of a set of clauses.
Each clause has the form
\begin{onehalfspace} \begin{center}
$B_1(\alpha_1, \beta_1) \land B_2(\alpha_2, \beta_2) \land \dots
\land B_m(\alpha_m, \beta_m) \rightarrow R$
\end{center} \end{onehalfspace}
\noindent
where $B_1 \ldots B_m$ are atoms and $R$ is the activation value of the clause.
Each atom $B_i$ has the form $\alpha_i x_j < \beta_i$ for some $j$.
Here the $\alpha_i, \beta_i, R$ are trainable parameters for the classifier.
The implication denotes that if the body of the clause holds (all atoms $B_1 \dots B_m$ are true),
then the clause returns an activation value $R$, otherwise it returns $0$.
The classifier's output is computed as $\mathcal{F}_{\alpha,\beta,R}(x) = \sum R_i$, where the sum is over all clauses that are satisfied by $x$. 
\end{definition}



Logic ensembles can be viewed as a generalization of decision trees.
Any decision tree (or ensemble of trees) can be expressed as a logic ensemble, with one clause per leaf in the tree,
but logic ensembles are more expressive (for a fixed number of clauses) because they can also represent other structures of rules.
Researchers have previously shown how to train decision trees with monotonicity properties, so our work can be viewed as an extension of this to a more expressive class of classifiers and a demonstration that this allows enforcing other robustness properties as well.

\subsection{Integer Linear Program Verifier}
\label{sec:Integer Linear Program Verifier}

We present a new verification algorithm that uses integer linear programming
to verify the global robustness properties of logic ensembles,
including trees.
First, we encode the logic ensemble using boolean variables,
adding consistency constraints
among the boolean variables. Then, for each global robustness property, we symbolically represent the input and output of the classifier in terms of these boolean variables, and add extra constraints to assert that the robustness property is violated.
Next, we check feasibility of these constraints, expressing them as a 0/1 integer linear program.
If an ILP solver can find a feasible solution, the classifier
does not satisfy the corresponding global robustness property, and the solver will give us a counterexample.
On the other hand, if the integer linear program is infeasible,
the classifier satisfies the global robustness property.

We describe our algorithm in more detail below. We use the binary variables in the 0/1 integer linear program
to represent an arbitrary input $x$:

\emph{Atom ($p$):} We use $p_i$ variables to encode the truth value
of atoms. Each atom is transformed into the same form of predicate $x_j < \eta_i$.
Therefore, each predicate variable $p_i$ is associated with
a feature dimension $j$ and the inequality threshold $\eta_i$.

\emph{Clause Status ($l$):} We use $l_k$ variables to encode
the truth value of clauses. When $l_k = 1$, all atoms in the $k$-th clause
are true, and the clause adds $R_k$ activation value for the classifier output. 

\emph{Auxiliary Variables ($a$):} We use $a_{i1}$ and $a_{i2}$ variables to encode the
neighborhood range for the small neighborhood property,
defined in Equation~\eqref{eq:small_neighborhood}.
For each predicate $x_j < \eta_i$, we create $a_{i_1}$
variable for $x_j < \eta_i- \sigma_j * \epsilon$,
and $a_{i_2}$ variable for $x_j < \eta_i + \sigma_j * \epsilon$.
If $x_j$ is within $[\eta_i- \sigma_j * \epsilon, \eta_i + \sigma_j * \epsilon]$,
we must have $a_{i_2} - a_{i_1} = 1$.

\emph{Double Variables:} All the aforementioned variables
are doubled as $p', l', a'$ to represent the perturbed input $x'$ bounded by
the robustness property definition. The classifier's output for arbitrary $x, x'$
are:
\begin{center}
$\mathcal{F}(x) = \sum\limits_k{l_k R_k}, \mathcal{F}(x') = \sum\limits_k{l'_k R_k}$
\end{center}

\begin{figure*}[h!]
    \centering
    \includegraphics[width=\textwidth]{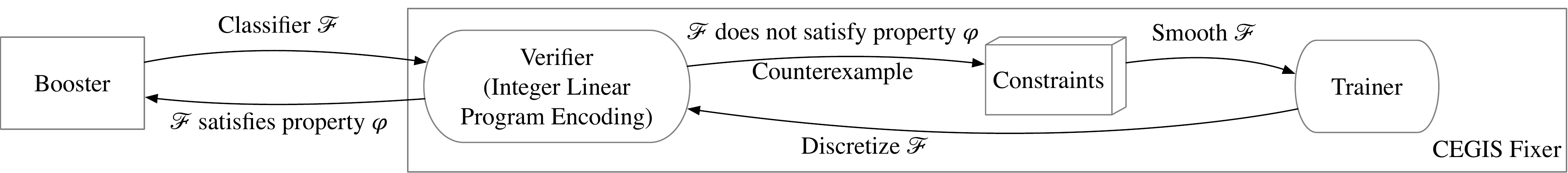}
    \caption{Booster-fixer training framework.}
    \label{fig:framework}
\end{figure*}

Then, we create the following linear constraints to ensure dependency between
variables of the classifier.

\emph{Integer Constraints:} We merge predicates for integer features.
For example, if $x_5$ is an integer feature, we use the same binary variable
to represent atoms $x_5 < 0.2$ and $x_5 < 0.3$.

\emph{Predicate Consistency Constraints:}
Predicate variables for the same feature dimension are sorted and constrained accordingly.
For any $p_i, p_t$ belonging to the same feature dimension with $\eta_i < \eta_t$,
$x_j < \eta_t$ must be true if $x_j < \eta_i$ is true. Thus, we have $p_i \leq p_t$. 

\emph{Redundant Predicate Constraints:} We set redundant variables to be always 0.
For example, $x_j < 0$ is always false for a nonnegative feature.

Lastly, we use standard boolean encoding for \textbf{Property Violation Constraints} to verify a given property,
as shown in Table~\ref{tab:constraints}. Each property has a pair of input
and output constraints. In addition, we encode input consistency constraints
for a given property. 

\emph{Input Consistency Constraints:} For any $x_i$ and $x'_i$,
if they are defined to be the same by the property, we set the related predicate
variables $p$ and $p'$ to have the same value.

\emph{Monotonicity:}
For two arbitrary inputs $x$ and $x'$, if $x < x'$, then there must be at least one more predicate true for $x'$.
The output constraint for (1) denotes violation to monotonically non-decreasing output,
and (2) denotes violation to monotonically non-increasing output.

\emph{Stability:} The input constraint says $x$ and $x'$ are different, and the output constraint
says the difference between $\mathcal{F}(x)$ and $\mathcal{F}(x')$ are larger than the stable constant $c_\text{stability}$.

\emph{High Confidence:} The input constraint says $x$ is classified as malicious with at least $\delta$ confidence.
The output constraints says $x'$ is classified as benign. 

\emph{Redundancy:} The input and output constraints are the same as high confidence property.
However, we encode predicate consistency constraints differently. We set variable equality constraints
such that $x_i = x'_i$ for $i$ outside the $M$ low-cost feature groups.
We encode the disjunction of the conditions that
only features from the same group are changed.

\emph{Small Neighborhood:}
For input constraint, for each $j$, we first encode the conjunction that $x_j$ and $x'_j$
are both within a small neighborhood interval $a_{i_2} - a_{i_1} = 1$.
Then, we encode the disjunction that $x_j$ and $x'_j$ can be only within one
of such intervals surrounding the predicates. The output constraint says the difference
between the outputs are larger than the allowed range.

\begin{table}[t!]
  \centering
  \begin{tabular}{l | c}
    \hline
    \textbf{Property} & \textbf{Property Violation Constraints} \\
    \hline \hline
    \multirow{6}{*}{Monotonicity} & \multirow{3}{*}{
    \begin{tabular}{@{}c@{}}
        (1) In: $\sum\limits_{x}{p_i} \leq \sum\limits_{x'}{p'_i} + 1$, \\
        Out: $\sum\limits_{x}{l_k*R_k} > \sum\limits_{x'}{l'_k*R_k}$
    \end{tabular}
    }   \\
    & \\
    & \\ \cline{2-2}
    & 
    \multirow{3}{*}{
    \begin{tabular}{@{}c@{}}
        (2) In: $\sum\limits_{x}{p_i} \leq \sum\limits_{x'}{p'_i} + 1$, \\
        Out: $\sum\limits_{x}{l_k*R_k} < \sum\limits_{x'}{l'_k*R_k}$
    \end{tabular}
    }   \\
    & \\
    & \\ \hline
    \multirow{3}{*}{Stability} &     \multirow{3}{*}{
    \begin{tabular}{@{}c@{}}
        In: $\abs{\sum\limits_{x}{p_i} - \sum\limits_{x'}{p'_i}} \geq 1$, \\
        Out: $\abs{\sum\limits_{x}{l_k*R_k} - \sum\limits_{x'}{l'_k*R_k}} > c_\text{stability}$
    \end{tabular}
    }   \\
    & \\
    & \\ \hline
    \multirow{3}{*}{\begin{tabular}[c]{@{}c@{}}High \\ Confidence\end{tabular}} & \multirow{3}{*}{
    \begin{tabular}[c]{@{}c@{}}
    In: $\sum\limits_{x}{l_k*R_k} \geq g^{-1}(\delta)$, \\
    Out: $\sum\limits_{x'}{l'_k*R_k} < 0$
    \end{tabular}} \\
    & \\
    & \\ \hline
    \multirow{2}{*}{Redundancy} & \multirow{2}{*}
    {\begin{tabular}[c]{@{}c@{}}
    Same constraints as high confidence.
    \\
    Diff predicate consistency constr.
    \end{tabular}} \\
    & \\ \hline
    \multirow{4}{*}{\begin{tabular}[c]{@{}c@{}}Small \\ Neighborhood\end{tabular}} & In: for each feature $j$, $x_j$ and $x'_j$ are \\
    & in the same interval $[\eta_i - \sigma_j * \epsilon, \eta_i + \sigma_j * \epsilon]$, \\ 
    & Out: $\abs{\sum\limits_{x}{l_k*R_k} - \sum\limits_{x'}{l'_k*R_k}} > \epsilon * c_\text{neighbor} $ \\ \hline
  \end{tabular}
  \caption{Property violation constraints for the verifier.
  }
  \label{tab:constraints}
\end{table}

\section{Training Algorithm}
\label{sec:Training Algorithm}

\subsection{Framework}
\label{sec:Framework}

Figure~\ref{fig:framework} gives an overview of our booster-fixer training framework.
We have two major components, a booster and a fixer, which interact with each other
to train a classifier with high accuracy that satisfies the global robustness properties.

The booster increases the size of the classifier, and improves classification performance.
We run $N$ boosting rounds.
The classifier is a sum of logic ensembles,
$\mathcal{F}(x) = \sum_{b=1}^{N}{f_b(x)}$,
where each $f_b$ is a logic ensemble.
In the $b$-th boosting round, the booster adds $f_b$ to the ensemble,
proposing a candidate classifier $\sum_{i=1}^{b}{f_i(x)}$ (which does not need to satisfy any robustness property); then the fixer fixes property violations for this classifier.
Empirically, more boosting rounds typically lead to better test accuracy after fixing the properties.

The fixer uses counterexample guided inductive synthesis (CEGIS) to fix
the global robustness properties for the current classifier.
We use a verifier and a trainer to iteratively
train the classifier, eliminating counterexamples in each iteration,
until the classifier satisfies the properties.
In each CEGIS iteration, we first use the verifier to find a counterexample
that violates the property. Then, we
use training constraints to eliminate the counterexample.
The training constraints reduce the space of candidate classifiers
and make progress towards satisfying the property.
We accumulate the training constraints over the CEGIS iterations,
so that our classifier is guaranteed to satisfy global robustness properties
when the fixer returns a solution.
After we fix the global robustness properties for the classifier $\sum_{i=1}^{b}{f_i(x)}$, 
we go back to boost the next $b+1$ round, to further improve the test accuracy.
We will discuss the details of our training algorithm next.

\subsection{Robust Training Algorithm}
\label{sec:Robust Training Algorithm}


Algorithm~\ref{alg:training} presents the pseudo-code for our global robustness training algorithm.
As inputs, the algorithm needs specifications of the global robustness properties $\Phi$
(Section~\ref{section:Global Robustness Properties}) and a
training dataset $\mathcal{D}$ to train for both robustness and accuracy.
In addition, we need a booster $\mathcal{B}$ (Section~\ref{sec:Framework}),
a verifier $\mathcal{V}$ (Section~\ref{sec:Integer Linear Program Verifier}),
a trainer $\mathcal{S}$ (described below) and a loss function $\mathcal{L}$
to run the booster-fixer rounds. We can specify the number of boosting rounds $N$.
The algorithm outputs a classifier $\mathcal{F}$ that satisfies all the specified
global robustness properties.

\begin{algorithm}[t!]
\caption{Global Robustness Property Training Algorithm}
\label{alg:training}
\begin{flushleft}
\textbf{Input:} Global robustness properties $\Phi$.\\
Training set $\mathcal{D} = \{(x^{(i)}, y^{(i)})\}$. Number of boosting rounds $N$. \\
\textbf{Input:} Booster $\mathcal{B}$. Verifier $\mathcal{V}$. Trainer $\mathcal{S}$. Loss function $\mathcal{L}$. \\
\textbf{Output:} classifier $\mathcal{F}$ that satisfies all the properties in $\Phi$. \\
\end{flushleft}
\begin{algorithmic}[1]
\State Initialize an empty classifier $\mathcal{F}$.
\State Initialize an empty set of constraints $\mathcal{C}$.
\For{$b=1$ to $N$}
	\State $\mathcal{B}$ adds $f_b$ to $\mathcal{F}$, so that $\mathcal{F}(x) = \sum_{i=1}^{b}{f_i(x)}$.
	\While{$\exists \varphi_i \in \Phi, \mathcal{F} \not\models \varphi_i$}
	\For{each $\varphi_i \in \Phi$}
		\If{$\mathcal{F} \not\models \varphi_i$}
			\State Call $\mathcal{V}(\mathcal{F})$ to get a counterexample $(x, x')$.
			\State Call GenConstraint($x, x'$) to get a constraint $c$.
			\State Add $c$ to $\mathcal{C}$.
		\EndIf
	\EndFor
	\If{the constraints in $C$ are infeasible}
	    \State \Return Failure.
	\EndIf
    \State Update $\theta=(\alpha,\beta,R)$ using $\mathcal{S}(\alpha, \beta, R, \mathcal{D}, \mathcal{C})$.
	\EndWhile
\EndFor \\
\Return $\mathcal{F}$ \\

\Function{GenConstraint}{$x, x'$}:
    \State \Return a constraint on $R$ that implies $\varphi(x,x')$,
    \State when $x,x',\alpha,\beta$ are fixed at their current values. 
\EndFunction \\

\Function{$\mathcal{S}$}{$\alpha, \beta, R, \mathcal{D}, \mathcal{C}$}:
	\State Update $\alpha,\beta,R$ using projected gradient descent:
	\State $\alpha,\beta,R = \argmin_{\alpha,\beta,R} \sum_{\mathcal{D}} \mathcal{L} (y, g(\mathcal{F}_{\alpha,\beta,R}(x)))$
	\State s.t. $R$ satisfies all constraints in $\mathcal{C}$
\EndFunction
\end{algorithmic}
\end{algorithm}

First, our algorithm initializes an empty ensemble classifier $\mathcal{F}$ such that
we can add sub-classifiers into it over the boosting rounds (Line 1).
We also initialize an empty set of constraints $\mathcal{C}$ (Line 2).
Then, we go through $N$ rounds of boosting in the for loop
from Line 3 to Line 18. Within each boosting round $b$, the booster $\mathcal{B}$
adds a tree to the ensemble classifier, such that the current classifier is
$\mathcal{F}(x) = \sum_{i=1}^{b}{f_i(x)}$.
The fixer runs the while loop from Line 5 and Line 17.
As long as the classifier does not satisfy all specified global robustness properties,
we proceed with fixing the properties (Line 5).
For each property, if the model does not satisfy the property,
the verifier $\mathcal{V}$ produces a counterexample $(x, x')$ (Line 8).
Then, we generate a constraint $c$ that can eliminate the counterexample
by calling a procedure GenConstraint$(x, x')$ (Line 9).
We add the constraint to the set $C$.
If the set of constraints are infeasible, the algorithm returns failure.
Otherwise, we use the trainer $\mathcal{S}$
to train the weights $\theta$ using projected gradient descent (Line 16 calls $\mathcal{S}(\alpha, \beta, R, \mathcal{D})$).
We follow the gradient of the loss function w.r.t. the weights $\theta$,
update the weights, and then
we project the weights onto the $\ell_2$ norm ball centered around updated weights, subject to all constraints in $C$,.
Therefore, the weights satisfy all constraints in $C$.

\textbf{Generating Constraint.} The GenConstraint function
generates a constraint according to counterexample $(x, x')$.
We use $\mathcal{F}_R(x)$ to represent the \emph{equivalence class} of $x$: all inputs that are classified
the same as $x$, i.e., their classification score
is a sum of return values for the same set of clauses as $x$.
We can use constraints over
$\mathcal{F}_R(x)$ and $\mathcal{F}_R(x')$ to capture
the change in the classifier's output, to satisfy the global robustness property for all counterexamples in the equivalence class.
Specifically, in Table~\ref{tab:training_constraints}, we list the constraints
for five properties we have proposed. The constraints for monotonicity,
stability, redundancy, and the small neighborhood properties
have the same form as the output requirement specified in
the corresponding property definitions.
For the high confidence property, our training constraint
is to bound the drop of the classification score to be no
more than the $g^{-1}$ of the high confidence threshold $\delta$. This constraint aims to satisfy Property 3a (Equation~\ref{eq:max_decrease}), which then satisfies
Property 3 high confidence (Lemma~\ref{lemma1}).
This constraint eliminates counterexamples
faster than using the constraint $\mathcal{F}_R(x') \geq 0$.

\begin{table}[t!]
  \centering
  \begin{tabular}{l | c}
    \hline
    \textbf{Property} & \textbf{Training Constraints} \\
    \hline \hline
    \multirow{2}{*}{Monotonicity} & (1) $\mathcal{F}_R(x) \leq \mathcal{F}_R(x')$  \\ \cline{2-2}
    & (2) $\mathcal{F}_R(x) \geq \mathcal{F}_R(x')$\\ \hline
    Stability & $\abs{\mathcal{F}_R(x) - \mathcal{F}_R(x')} \leq c_\text{stability}$ \\ \hline
    High Confidence & $\mathcal{F}_R(x) - \mathcal{F}_R(x') < g^{-1}(\delta) $ \\ \hline
    Redundancy & Same as high confidence. \\ \hline
    Small Neighborhood & $\abs{\mathcal{F}_R(x) - \mathcal{F}_R(x')} \leq \epsilon * c_\text{neighbor}$ \\ \hline
  \end{tabular}
  \caption{Constraints used for the training algorithm.
  }
  \label{tab:training_constraints}
\end{table}

\textbf{CLN Trainer.} Within the fixer, we use 
Continuous Logic Networks (CLN)~\cite{ryan2019cln2inv} to
train the classifier to satisfy all constraints in $C$.
If we directly enforce constraints over the weights of the classifier,
the structure and weights will not have good accuracy.
We want to use gradient-guided optimization to preserve accuracy
of the classifier while satisfying the constraints.
Since our discrete ensemble classifier is non-differentiable,
we first use CLN to smooth the logic ensemble.
Following Ryan et al.~\cite{ryan2019cln2inv}, we use a shifted and scaled sigmoid function
to smooth the inequalities, product t-norm to smooth conjunctions.
To train the smoothed model, we use binary cross-entropy loss
as the loss function $\mathcal{L}$ for classification,
and minimize the loss using projected gradient descent
according to the constraints $C$.
After training, we discretize the model back to logic ensemble for prediction,
so we can verify the robustness properties.
Note that although our training constraints $C$ are only related to
the returned activation values of the clauses (Table~\ref{tab:training_constraints}),
the learnable parameters of atoms may change as well due to the projection
(See Appendix~\ref{appendix:Stability for Twitter Account Classifier} for an example).
In some cases, the structure of the atom can change as well.
For example, if an atom $x_0 < 5$ is trained to become $-0.5 * x_0 < 2$,
this changes the inequality of the atom.

\subsubsection{Supported Properties.}
\label{sec:Supported Properties}
Our framework can handle any global robustness property $\varphi$
of the form
$\forall x_1,\dots,x_k . \mu(x_1,\dots,x_k) \implies \nu(\mathcal{F}(x_1),\dots,\mathcal{F}(x_k))$
where the set of values
$\{(y_1,\dots,y_k): \nu(y_1,$
$\dots,y_k)\}$ is a convex
set, as then we can project the classifier weights accordingly
(line 27 to line 29 in Algorithm~\ref{alg:training}).
For example, for the monotonicity property, $k = 2$,
$\mu(x_1, x_2) \coloneqq x_1 \leq x_2$,
and $\nu(\mathcal{F}(x_1),\mathcal{F}(x_2)) \coloneqq \mathcal{F}(x_1) \leq \mathcal{F}(x_2)$.
This class includes but is not limited to all global robustness properties with arbitrary
linear constraints on the outputs of the classifier.


\subsubsection{Algorithm Termination.}
Algorithm~\ref{alg:training} is guaranteed to terminate.
When the algorithm terminates, if it finds a classifier, the classifier is guaranteed to satisfy the properties.
However, there is no guarantee that it will find a classifier (line 14 of Algorithm~\ref{alg:training} returns Failure),
but empirically our algorithm can find an accurate classifier that satisfies all the specified properties,
as shown in the results in Section~\ref{sec:Robust Logic Ensembles}.





\section{Evaluation}


\begin{table*}[!bt]
	\centering
	\small
	\begin{tabular}{c|c|l}
		\hline
		\textbf{Dataset} & \textbf{Property} & \textbf{Specification} \\ \hline\hline
		\multirow{6}{*}{Cryptojacking} 
	    & - & Low-cost features: whether a website uses one of the hash functions on the list. \\ \cline{2-3}
	    & Monotonicity & Increasing: all features \\ \cline{2-3}
		& Stability & All features are stable. Stable constant = $0.1$\\ \cline{2-3}
		& High Confidence & $\delta = 0.98$  \\ \cline{2-3}
		& Small Neighborhood & $\epsilon = 0.2, c = 0.5$ \\ \cline{2-3}
	    & Combined & Monotonicity, stability, high confidence, and small neighborhood \\ \hline
		\multirow{10}{*}{\begin{tabular}[c]{@{}c@{}}Twitter Spam \\ Accounts \end{tabular}} 
		& \multirow{2}{*}{-} & Low-cost features: LenScreenName ($\geq 5$ char), LenProfileDescription, NumTweets, NumDailyTweets, \\ 
		& & TweetLinkRatio, TweetUniqLinkRatio, TweetAtRatio, TweetUniqAtRatio. \\ \cline{2-3}
		& \multirow{2}{*}{Monotonicity} & Increasing: LenScreenName, NumFollowings, TweetLinkRatio, TweetUniqLinkRatio \\
		& & Decreasing: AgeDays, NumFollowers \\ \cline{2-3}
		& Stability & Low-cost features are stable. Stable constant = $8$. \\ \cline{2-3}
		& High Confidence & $\delta = 0.98$. Attacker is allowed to perturb any one of the low-cost features, but not multiple ones. \\ \cline{2-3}
		& \multirow{2}{*}{Redundancy} & $\delta = 0.98, M=2$, any 2 in 4 groups satisfy redundancy: 1) LenScreenName ($\geq 5$ char), LenProfileDescription \\
		& & 2) NumTweets, NumDailyTweets 3) TweetLinkRatio, TweetUniqLinkRatio 4) TweetAtRatio, TweetUniqAtRatio \\ \cline{2-3}
		& Small Neighborhood & $\epsilon = 0.1, c = 50$ \\ \cline{2-3}
	    & Combined & Monotonicity, stability, high confidence, redundancy, and small neighborhood \\ \hline
		\multirow{6}{*}{\begin{tabular}[c]{@{}c@{}}Twitter Spam \\ URLs \end{tabular}}
		& - & Low-cost features: Mention Count, Hashtag Count, Tweet Count, URL Percent. \\ \cline{2-3}
		& \multirow{2}{*}{Monotonicity} & Increasing: 7 shared resources features. EntryURLid, AvgURLid, ChainWeight, \\
		& & CCsize, MinRCLen, AvgLdURLDom, AvgURLDom \\ \cline{2-3}
		& Stability & Low-cost features are stable. Stable constant = $8$. \\ \cline{2-3}
		& High Confidence & $\delta = 0.98$. Attacker is allowed to perturb any one of the low-cost features, but not multiple ones.\\ \cline{2-3}
		& Small Neighborhood & $\epsilon = 1.5, c = 10$ \\ \hline
	\end{tabular} 
	\caption{Global robustness property specifications for three datasets.}
	\label{tab:specifications}
\end{table*}

\subsection{Datasets and Property Specifications}
\label{sec:Datasets and Property Specifications}

We evaluate how well our training technique works
on three security datasets of different scale: detection of cryptojacking~\cite{kharraz2019outguard},
Twitter spam accounts~\cite{lee2011seven}, and Twitter spam URLs~\cite{kwon2017domain}.
Table~\ref{tab:datasets} shows the size of the datasets.
In total, the three datasets have 4K, 40K, and 422,672 data points
respectively.
Appendix~\ref{appendix:Classification Features} lists
all the features for the three datasets.
We specify global robustness properties
for each dataset (Table~\ref{tab:specifications}) based on our analysis of what kinds of evasion strategies might be relatively easy and inexpensive for attackers to perform.

\noindent\textbf{Monotonic Directions.} To specify monotonicity properties,
we use two types of security domain knowledge, suspiciousness and economic cost.
We specify a classifier to be monotonically increasing for a feature if,
(1) an input is more suspicious as the feature value increases,
or, (2) a feature requires a lot of money to be decreased but
easier to be increased, such that we force the attackers
to spend more money in order to reduce the classification score.
Similarly, we specify a classifier to be monotonically decreasing along
a feature dimension by analyzing these two aspects.

\subsubsection{Cryptojacking}
Crytpojacking websites are malicious webpages that hijack user's computing power to mine cryptocurrencies. 
Kharraz et al.~\cite{kharraz2019outguard} collected cryptojacking website data from 12 families of mining libraries. 
We randomly split the dataset containing 2000 malicious websites and 2000 benign websites into 70\% training set and 30\% testing data. In total, there are 2800 training samples and 1200 testing samples.
We use the training set as the validation set.

\noindent\textbf{Low-cost feature.} Among all features, only the hash function feature
is low cost to change. The attacker may use a hash function not on the list, or may
construct aliases of the hash functions to evade the detection. 
Since the other features are related to usage of standard APIs or essential to running
high performance cryptocurrency mining code, they are not trivial to evade.

\noindent\textbf{Monotonicity.} We specify all features to be monotonically increasing. Kharraz et al.~\cite{kharraz2019outguard} proposed seven features to classify cryptojacking websites. A website is more suspicious if any of these features have larger values.
Specifically, cryptojacking websites prefer to use WebSocket APIs to
reduce network communication bandwidth, use WebAssembly to run mining code faster,
runs parallel mining tasks, and may use a list of hash functions. Also, if
a website uses more web workers, has higher messageloop load,
and PostMessage event load, it is more suspicious are performing some heavy load
operations such as coin mining.

\noindent\textbf{Stability.} Since this is a small dataset, we specify all features to be stable,
with stable constant $0.1$.

\noindent\textbf{High Confidence.} 
We use high confidence threshold $0.98$. 

\noindent\textbf{Small Neighborhood.} We specify $\epsilon = 0.2, c = 0.5$. Each feature is allowed to be perturbed by up to $20\%$ of its standard deviation,
and the output of the classifier is bounded by $0.01$.

\subsubsection{Twitter Spam Accounts}

Lee et al.~\cite{lee2011seven} used social honeypot to collect information about Twitter spam accounts,
and randomly sampled benign Twitter users. We reimplement 15 of their proposed features,
including account age, number of following, number of followers, etc., with the entire list
in Table~\ref{tab:features}, Appendix~\ref{appendix:Classification Features}.
We randomly split the dataset into 36,000 training samples and 4,000 testing samples,
and we use the training set as validation set.

\begin{table*}[ht!]
\begin{minipage}[b]{0.48\textwidth}
\centering
    \small
	\begin{tabular}{c|r|r|r|r}
		\hline
		\textbf{Dataset} & \begin{tabular}[c]{@{}c@{}}\textbf{Training}\\\textbf{set size}\end{tabular} & \begin{tabular}[c]{@{}c@{}}\textbf{Test}\\\textbf{set size}\end{tabular}  & \begin{tabular}[c]{@{}c@{}}\textbf{Validation}\\\textbf{set size}\end{tabular} & \begin{tabular}[c]{@{}c@{}}\textbf{\# of}\\\textbf{features}\end{tabular}\\\hline\hline
		Cryptojacking~\cite{kharraz2019outguard}  & 2800 & 1200 & Train & 7 \\\hline
		\begin{tabular}[c]{@{}c@{}}Twitter Spam \\ Accounts~\cite{lee2011seven} \end{tabular} & 36,000 & 4,000 & Train & 15 \\\hline
		\begin{tabular}[c]{@{}c@{}}Twitter Spam \\ URLs~\cite{kwon2017domain} \end{tabular} & 295,870 & 63,401 & 63,401 & 25 \\\hline
	\end{tabular} 
	\caption{The three datasets we use to evaluate our methods.
	For cryptojacking and Twitter spam account datasets, we use the training set as the validation set.}
	\label{tab:datasets}
\end{minipage}\hfill
\begin{minipage}[b]{0.24\textwidth}
\centering
	\begin{tabular}{c|c}
		\hline
		\textbf{\# Char} & \textbf{Price}  \\\hline\hline
		$\leq 4$ & \$1,598.09  \\\hline
		$\geq 5$ & \$298.40 \\ \hline
		Unspecified & \$147.62 \\ \hline
	\end{tabular} 
	\caption{Average price of for-sale Twitter accounts with different number of characters for the username.}
	\label{tab:charprice}
\end{minipage}\hfill
\begin{minipage}[b]{0.23\textwidth}
\includegraphics[width=0.98\textwidth]{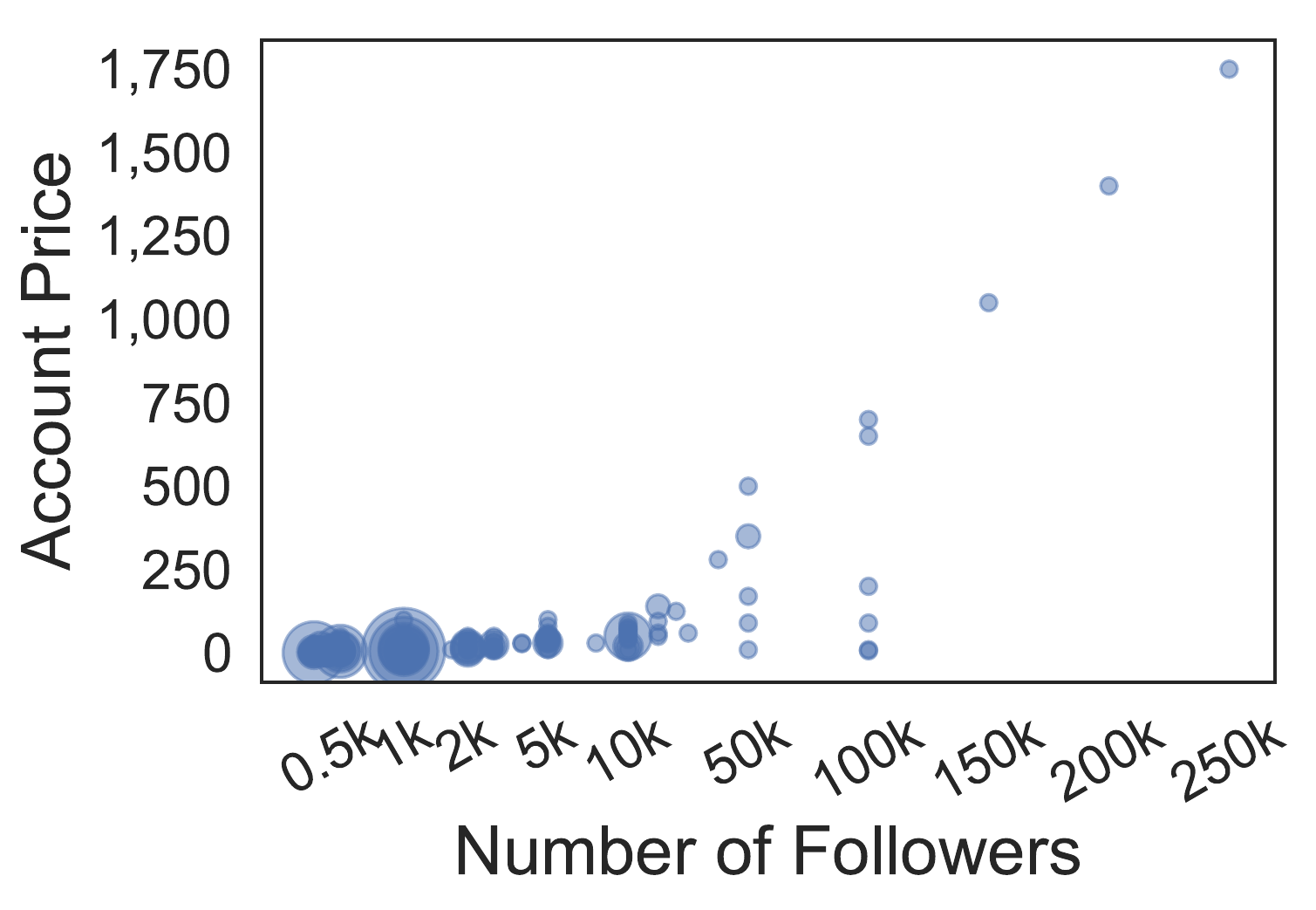}
\captionof{figure}{Price (\$) of for-sale Twitter accounts with different number of followers.}
\label{fig:followerprice}
\end{minipage}\hfill
\end{table*}

\noindent\textbf{Economic Cost Measurement Study.}
We have crawled and analyzed 6,125 for-sale Twitter account posts from an underground forum to measure the effect of LenScreenName and NumFollowers
on the prices of the accounts.
\begin{itemize}[leftmargin=*]
\item \textbf{LenScreenName.}
Accounts with at most 4 characters are deemed special in the underground forum, usually on sale with a special tag `3-4 Characters'.
Table~\ref{tab:charprice} shows that the average price of accounts
with at most 4 characters is five times the price of accounts with more characters
or unspecified characters. More measurement results are in Appendix~\ref{appendix:LenScreenName Feature for Twitter Spam Account Dataset}.
\item \textbf{NumFollowers.}
We measure the account price distribution according to different tiers of followers
indicated in the underground forum, from 500, 1K, 2K up to 250K followers.  
As shown in Figure~\ref{fig:followerprice}, the account prices increase
as the number of followers increases.
\end{itemize}

\noindent\textbf{Low-cost Features.} We identify 8 low-cost features in total. Among them, two features are related to
the user profile, LenScreenName and LenProfileDescription. According to our economic
cost measurement study, accounts with user names up to 4 characters are considered
high cost to obtain. Therefore, we specify LenScreenName with at least 5 characters to be low cost
feature range. The other four low-cost features are related to the tweet content, since they
can be trivially modified by the attacker: NumTweets, NumDailyTweets, TweetLinkRatio, TweetUniqLinkRatio, TweetAtRatio, and TweetUniqAtRatio.

\noindent\textbf{Monotonicity.} We specify two features to be monotonically increasing,
and two features to be monotonically decreasing, based on domain knowledge about
suspicious behavior and economic cost measurement studies.

\emph{Increase in suspiciousness:} Spammers tend to follow a lot of people, expecting social reciprocity
to gain followers for spam content, so large NumFollowings makes an account more suspicious.
If an account sends a lot of links (TweetLinkRatio and TweetUniqLinkRatio), it also becomes more suspicious. 

\emph{Decrease in suspiciousness:} Since cybercriminals are constantly trying to evade blocklists, if an account is newly registered with a small AgeDays value, it is more suspicious.

\emph{Increase in economic cost:}
Since the attacker needs to spend more money to obtain Twitter accounts
with very few characters,
we specify the LenScreenName to be monotonically increasing,

\emph{Decrease in economic cost:}
Since it is expensive for attackers to obtain more followers, we specify the NumFollowers feature to be monotonically decreasing.

\noindent\textbf{Stability.} We specify all the low-cost features to be stable, with stable constant 8. 

\noindent\textbf{High Confidence.}
We allow the attacker to modify any one of the low cost features individually,
but not together.
We use a high confidence prediction threshold $0.98$. 

\noindent\textbf{Redundancy.} Among the 8 low-cost features, we identify four groups,
where each group has one feature that counts an item in total, and one other feature that counts
the same item in a different granularity (daily or unique count).
We specify that any two groups are redundancy of each other ($M=2$)
with $\delta=0.98.$

\noindent\textbf{Small Neighborhood.} We specify $\epsilon = 0.1, c = 50$. The attacker
can change each feature up to $10\%$ of its standard deviation value, and the classifier output change
is bounded by $5$.



\begin{table*}[ht!]
  \centering
  \small
  \begin{tabular}{l | rrrrr | ccccc}
    \hline
    & \multicolumn{5}{|c}{\textbf{Performance}} & \multicolumn{5}{|c}{\textbf{Global Robustness Properties}} \\
    \textbf{Model} & TPR & FPR & Acc & \multirow{2}{*}{AUC} & \multirow{2}{*}{F1} & \multirow{2}{*}{Monotonicity} & \multirow{2}{*}{Stability} & \multirow{2}{*}{\begin{tabular}{@{}c@{}}{High}\\{Confidence}\end{tabular}} & \multirow{2}{*}{Redundancy} & \multirow{2}{*}{\begin{tabular}{@{}c@{}}{Small}\\{Neighborhood}\end{tabular}} \\
    & (\%) & (\%) & (\%) &  &  & & & & & \\
    \hline \hline
    \multicolumn{11}{l}{\textbf{Cryptojacking Detection}} \\ \hline \hline
    XGB & 100 & 0.3 & 99.8 & .99917 & .998 & \xmark & \xmark & \cmark & N/A & \xmark \\ \hline
    Neural Network & 100 & 0.2 & 99.9 & .99997 & .999 & \xmark & \xmark & ? & N/A & \xmark  \\ \hline
    \multicolumn{11}{l}{\textbf{Models with Monotonicity Property}} \\ \hline
    Monotonic XGB & 99.8 & 0.3 & 99.8 & .99969 & .998 & \cmark & \xmark & \cmark & N/A & \xmark \\ \hline
    Nonnegative Linear & 97.7 & 0.2 & 98.8 & .99987 & .988 & \cmark & \xmark & \xmark & N/A & \xmark \\ \hline
    Nonnegative Neural Network & 99.7 & 0.2 & 99.8 & .99999 & .998 & \cmark & \xmark & ? & N/A & \xmark \\ \hline
    Generalized UMNN & 99.8 & 0.2 & 99.8 & .99998 & .998 & \cmark & \xmark & ? & N/A & \xmark \\ \hline
    \multicolumn{11}{l}{\textbf{DL2 Models with Local Robustness Properties, trained using PGD attacks}} \\ \hline
    DL2 Monotoncity & 99.7 & 0.2 & 99.8 & .99999 & .998 & \xmark & \xmark & ? & N/A & \xmark \\ \hline
    DL2 Stability & 99.8 & 0.8 & 99.5 & .99987 & .995 & \xmark & \xmark & \xmark & N/A & \xmark \\ \hline
    DL2 High Confidence & 99.7 & 0.2 & 99.8 & .99999 & .998 & \xmark & \xmark & ? & N/A & \xmark \\ \hline
    DL2 Small Neighborhood & 99.8 & 0.3 & 99.8 & .99999 & .998 & \xmark & \xmark & ? & N/A & \xmark  \\ \hline
    DL2 Combined & 99.3 & 0.2 & 99.6 & .99985 & .996 & \xmark & \xmark & \xmark & N/A & \xmark   \\ \hline
    \multicolumn{11}{l}{\textbf{Our Models with Global Robustness Properties}} \\ \hline
    Logic Ensemble Monotoncity & 100 & 0.3 & 99.8 & .99999 & .998 & \cmark & \xmark  & \cmark & N/A & \xmark  \\ \hline
    Logic Ensemble Stability & 100 & 0.3 & 99.8 & .99831 & .998 & \xmark & \cmark & \cmark & N/A & \cmark \\ \hline
    Logic Ensemble High Confidence & 100 & 0.3 & 99.8 & .99980 & .998 & \xmark & \xmark & \cmark & N/A & \xmark  \\ \hline
    Logic Ensemble Small Neighborhood & 100 & 0.3 & 99.8 & .99961 & .998 & \xmark & \cmark & \cmark & N/A & \cmark  \\ \hline
    Logic Ensemble Combined & 100 & 3.2 & 98.4 & .99831 & .985 & \cmark & \cmark & \cmark & N/A & \cmark  \\ \hline
  \end{tabular}
  \caption{Results for training cryptojacking classifier with global robustness properties, compared to baseline models. N/A: property not specified. $\cmark$: verified to satisfy the property. $\xmark$: verified to not satisfy the property. $?$: unknown.
  }
  \label{tab:cryptojacking_results}
\end{table*}

\subsubsection{Twitter Spam URLs}
Kwon et al.~\cite{kwon2017domain} crawled 15,828,532 tweets by 1,080,466 users. They proposed to use URL redirection chains and
and graph related features to classify spam URL posted on Twitter.
We obtain their public dataset and re-extract
25 features according to the description in the paper.
We extract four categories of features.
(1) Shared resources features capture that the attacker reuse resources
such as hosting servers and redirectors. (2) Heterogeneity-driven features
reflect that attack resources may be heterogeneous and
located around the world. (3) Flexibility-driven features capture
that attackers use different domains and initial URLs
to evade blocklists. (4) Tweet content features measure
the number of special characters, tweets, percentage of URLs
made by the same user.
This is the largest dataset in our evaluation, containing 422,672 samples in total.
We randomly split the dataset into $70\%$ training, $15\%$ testing, and
$15\%$ validation sets.


\noindent\textbf{Low-cost Features.} We specify four tweet content related features to be low cost, since the attacker can trivially modify the content.
They are, Mention Count, Hashtag Count, Tweet Count, and URL percent in tweets.
All the other features are high cost, since they are related to the graph of redirection chains, which
cannot be easily controlled by the attacker. Redirection chains form the
traffic distribution systems in the underground economy, where
different cybercriminals can purchase and re-sell the traffic~\cite{li2013finding,tds}. Thus
graph-related features are largely outside the control of a single attacker,
and are not trivial to change.

\noindent\textbf{Monotonicity.} Based on feature distribution measurement result,
we specify that 7 shared resources-driven features are monotonically increasing,
as shown in Table~\ref{tab:specifications}.
Example measurement result is in Appendix~\ref{appendix:CCSize Feature for Twitter Spam URL Dataset}.

\noindent\textbf{Stability}. We specify low-cost features to be stable,
with stable constant 8.

\noindent\textbf{High Confidence.} We use a high confidence prediction threshold $0.98$. Attacker is allowed to perturb any one of the low-cost features, but not multiple ones.

\noindent\textbf{Small Neighborhood.} We specify $\epsilon = 1.5, c = 10$, which means that the attacker
can change each feature up to $1.5$ times of its standard deviation, and the classifier output change
is bounded by $15$.

\subsection{Baseline Models}

\subsubsection{Experiment Setup}
\label{sec:Experiment Setup}
We compare against three types of baseline models,
(1) tree ensemble and neural network that are not trained
using any properties,
(2) monotonic classifiers, and (3) neural network models trained with
local robustness versions of our properties.

We train the following monotonic classifiers:
monotonic gradient boosted decision trees using XGBoost (Monotonic XGB),
linear classifier with nonnegative weights trained using logistic loss
(Nonnegative Linear), nonnegative neural network, and
generalized unconstrained monotonic neural network (UMNN)~\cite{wehenkel2019unconstrained}.
To evaluate against models with other properties,
we train local versions of our properties using DL2~\cite{fischer2018dl2},
which uses adversarial training.

\noindent\textbf{Malicious Class Gradient Weight.} Since the Twitter spam account
dataset~\cite{lee2011seven} is missing some important features,
we could not reproduce the exact model performance
stated in the paper. Instead, we get 6\% false positive rate.
We contacted the authors but they don't have
the missing data. Therefore, we tune the weight for the gradient of
the malicious class in order to maintain low false positive rate for the models.
We use line search to find the best weight from 0.1 to 1, which increments
by 0.1. We find that using 0.2 to weigh the gradient of the malicious class
can keep the training false positive rate around 2\% for this dataset. 
For the other two datasets, we do not weigh the gradients for different classes.

\noindent\textbf{Linear Classifier.} The nonnegative linear classifier is
a linear combination of input features with nonnegative weights,
trained using logistic loss.
If a feature is specified to be monotonically decreasing, we weigh
the feature by -1 at input.

\noindent\textbf{XGBoost Models.}
For the XGB model and Monotonic XGB model, we specify the following
hyperparameters for three datasets. We use 4 boosting round, max depth 4
per tree to train the cryptojacking classifier, and 
10 boosting rounds, max depth 5 to train Twitter spam account and
Twitter spam URL classifiers.

\noindent\textbf{Neural Network Models.}
The neural networks without any robustness properties as well as the nonnegative-weights networks have two fully connected layers, each with 200, 500, and 300 ReLU units for Cryptojacking, Twitter spam account, and Twitter Spam URL detection respectively. The generalized UMNNs, on the other hand, are positive linear combinations of multiple UMNN each with two fully connected layers and 50, 100, 100 ReLU nodes for each single monotonic feature.

We also use DL2 to train neural networks as baselines, which can achieve local robustness properties using adversarial training. All the DL2 models share the same architectures as the regular neural networks and the training objectives is to minimize the loss of PGD adversarial attacks~\cite{kurakin2016adversarial} that target the robustness properties. We use 50 iterations with step sizes equal to one sixth of the allowable perturbation ranges for PGD attacks in the training process. For testing, we use the same PGD iterations and step sizes but with 10 random restarts.

For all the baseline neural networks mentioned above, we train 50 epochs to minimize binary cross-entropy loss on training datasets using Adam optimizer with learning rate 0.01 and piecewise learning rate scheduler.


\subsubsection{Global Robustness Property Evaluation}

To evaluate whether the baseline models have obtained
global robustness properties, we use our Integer Linear Programming verifier
to verify the XGB and linear models. For neural network models,
we use PGD attacks to maximize the loss function for the property,
as described in Section~\ref{sec:Experiment Setup}.

Table~\ref{tab:cryptojacking_results} and Table~\ref{tab:twitter_results}
show the results of evaluating global robustness properties for
the baseline models.
For neural network models, if the PGD attack has found counterexample for the property, we
consider that the network does not satisfy the property.
Otherwise, we use ``?'' to mark it as unknown/unverified.

\textbf{Result 1: Monotonic XGB and generalized UMNN have the best true
positive rate (TPR) among monotonic classifiers.} For the two relatively large
datasets, the performance of monotonic XGB and generalized UMNN are much
better than nonnegative-weights models. For Twitter spam account detection,
the TPR of monotonic XGB is $16.6\%$ higher than the nonnegative linear classifier.

\textbf{Result 2: Some baseline models naturally satisfy a few
global robustness properties.} The monotonic XGB model for crytpojacking
detection satisfies the high confidence property, because it does not
use the low-cost feature ``hash function'' in the tree structure.
In comparison, our technique can train logic ensemble classifiers to satisfy the high confidence property
but still use the low-cost feature to improve accuracy.
Also, the nonnegative linear classifier for Twitter spam account detection satisfies the small neighborhood property, but it has only $70.1\%$ TPR.
Linear classifiers are known to be robust against small changes in input,
however they have poor performance for many datasets.

\textbf{Result 3: DL2 models cannot obtain global robustness.} We found
counterexamples for all DL2 models for Twitter spam URL detection, using PGD attacks over the property constraint loss, and most models trained with cryptojacking and Twitter spam account detection datasets. If the PGD attack fails to find a counterexample, it does not mean that
the model is verified to have the global property.
There are always stronger attacks that may find counterexamples,
as is often observed with adversarially trained models.

\subsection{Robust Logic Ensembles}
\label{sec:Robust Logic Ensembles}

\subsubsection{Training Algorithm Implementation}

We implement our booster-fixer framework as the following.
We use gradient boosting from XGBoost~\cite{chen2016xgboost} as the booster.
Within each round, we use the booster to add one tree
to the existing classifier, and encode the classifier
as the logic ensemble. This gives the fixer
the structure of clauses and weights ($\alpha, \beta, R$ )
as the starting classifier with high accuracy.

To implement the verifier in the fixer, we use APIs from Gurobi~\cite{gurobi}
to encode the integer linear program with boolean variables
and property violation constraints, and then call the Gurobi solver
to verify the global robustness properties of the logic ensemble.
If the solver returns that the interger linear program is infeasible,
the classifier is verified to satisfy the property. Otherwise,
we construct a counterexample according to solutions
for the boolean variables. For the trainer,
we use PyTorch to implement the smoothed classifier
as Continuous Logic Networks~\cite{ryan2019cln2inv, yao2020learning}.
Then, we use quadratic programming to implement
projected gradient descent.
We compute the updated weights by minimizing the $\ell_2$ norm between the initial weights
and the convex set defined by the training constraints.
We implement the mini-batch training for the smoothed classifier,
where we can specify the batch size. After one epoch of
training, we discretize the classifier to the logic ensemble encoding
for the verifier to verify the property again.
We also implement a few heuristics to speed up the
time for the verifier to generate counterexamples,
with details described in Appendix~\ref{appendix:Heuristics}.


\begin{table}[t!]
  \centering
  \begin{tabular}{l | c}
    \hline
    \textbf{Dataset} & \textbf{Median Training Time} \\ \hline \hline
    Cryptojacking & 25 min \\ \hline
    Twitter Spam Account & 29 hours \\ \hline
    Twitter Spam URL & 3 days \\ \hline
  \end{tabular}
  \caption{Median training time for Logic Ensemble models.
  }
  \label{tab:training_overhead}
\end{table}

\begin{table*}[ht!]
  \centering
  \small
  \begin{tabular}{l | rrrrr | ccccc}
    \hline
    & \multicolumn{5}{|c}{\textbf{Performance}} & \multicolumn{5}{|c}{\textbf{Global Robustness Properties}} \\
    \textbf{Model} & TPR & FPR & Acc & \multirow{2}{*}{AUC} & \multirow{2}{*}{F1} & \multirow{2}{*}{Monotonicity} & \multirow{2}{*}{Stability} & \multirow{2}{*}{\begin{tabular}{@{}c@{}}{High}\\{Confidence}\end{tabular}} & \multirow{2}{*}{Redundancy} & \multirow{2}{*}{\begin{tabular}{@{}c@{}}{Small}\\{Neighborhood}\end{tabular}} \\
    & (\%) & (\%) & (\%) &  &  & & & & & \\
    \hline \hline
    \multicolumn{6}{l}{\textbf{Twitter Spam Account Detection}} \\ \hline \hline
    XGB & 87.0 & 2.3 & 92.2 & .98978 & .920 & \xmark & \xmark & \xmark & \xmark & \xmark  \\ \hline
    Neural Network & 86.4 & 2.5 & 91.8 & .98387 & .915 & \xmark & \xmark & \xmark & \xmark & \xmark  \\ \hline
    \multicolumn{11}{l}{\textbf{Models with Monotonicity Property}} \\ \hline
    Monotonic XGB & 86.7 & 2.7 & 91.9 & .98865 & .916 & \cmark & \xmark & \xmark & \xmark & \xmark  \\ \hline
    Nonnegative Linear & 70.1 & 2.4 & 83.5 & .95321 & .814 & \cmark & \xmark & \xmark & \xmark &  \cmark \\ \hline
    Nonnegative Neural Network & 78.3 & 2.5 & 87.6 & .96723 & .867 & \cmark & \xmark & \xmark & \xmark & \xmark  \\ \hline
    Generalized UMNN & 86.0 & 3.9 & 90.9 & .97324 & .907 & \cmark & \xmark & \xmark & \xmark & \xmark  \\ \hline
    \multicolumn{11}{l}{\textbf{DL2 Models with Local Robustness Properties, trained using PGD attacks}} \\ \hline
    DL2 Monotoncity & 83.2 & 2.6 & 90.1 & .97800 & .896 & \xmark & \xmark & \xmark & \xmark & \xmark  \\ \hline
    DL2 Stability & 86.1 & 3.3 & 91.3 & .98029 & .910 & \xmark & ? & \xmark & \xmark & \xmark  \\ \hline
    DL2 High Confidence & 82.8 & 2.6 & 89.9 & .98056 & .894 & \xmark & \xmark & \xmark & \xmark & \xmark  \\ \hline
    DL2 Redundancy & 83.9 & 3.1 & 90.2 & .97898 & .898 & \xmark & \xmark & ? & \xmark & \xmark  \\ \hline
    DL2 Small Neighborhood & 88.3 & 3.5 & 92.2 & .98086 & .921 & \xmark & \xmark & \xmark & \xmark & \xmark \\ \hline
    DL2 Combined & 83.8 & 3 & 90.2 & .97738 & .898 & \xmark & \xmark & \xmark & \xmark & ?  \\ \hline
    \multicolumn{11}{l}{\textbf{Our Models with Global Robustness Properties}} \\ \hline
    Logic Ensemble Monotoncity & 83.2 & 3.2 & 89.8 & .97297 & .894 & \cmark & \xmark & \xmark & \xmark & \xmark  \\ \hline
    Logic Ensemble Stability & 86.0 & 2.1 & 91.8 & .98479 & .915 & \xmark & \cmark & \xmark & \xmark &  \xmark \\ \hline
    Logic Ensemble High Confidence & 86.1 & 2.6 & 91.6 & .98311 & .913 & \xmark & \cmark & \cmark & \xmark & \xmark  \\ \hline
    Logic Ensemble Redundancy & 85.5 & 3.2 & 91.0 & .98166 & .907 & \xmark & \cmark & \cmark & \cmark & \xmark \\ \hline
    Logic Ensemble Small Neighborhood & 83.9 & 2.5 & 90.5 &  .98325 & 0.901 & \xmark & \cmark & \xmark & \xmark & \cmark  \\ \hline
    Logic Ensemble Combined & 81.6 & 2.4 & 89.4 & .98142 & .888 & \cmark & \cmark & \cmark & \cmark & \cmark  \\ \hline \hline
    \multicolumn{6}{l}{\textbf{Twitter Spam URL Detection}} \\ \hline \hline
    XGB & 99.0 & 1.5 & 98.7 & .99834 & .986 & \xmark & \xmark & \xmark & N/A & \xmark  \\ \hline
    Neural Network & 98.8 & 2.9 & 97.9 & .99735 & .977 & \xmark & \xmark & \xmark & N/A & \xmark \\ \hline
    \multicolumn{11}{l}{\textbf{Models with Monotonicity Property}} \\ \hline
    Monotonic XGB & 99.4 & 1.7 & 98.8 & .99848 & .986 & \cmark & \xmark & \xmark & N/A & \xmark  \\ \hline
    Nonnegative Linear & 93.2 & 18.6 & 86.7 & .90218 & .861 & \cmark & \xmark & \xmark & N/A & \xmark  \\ \hline
    Nonnegative Neural Network & 98.0 & 6.9 & 95.3 & .98511 & .949 & \cmark & \xmark & \xmark & N/A & \xmark  \\ \hline
    Generalized UMNN & 98.8 & 2.6 & 98.0 & .99732 & .977 & \cmark & \xmark & \xmark & N/A & \xmark  \\ \hline
    \multicolumn{11}{l}{\textbf{DL2 Models with Local Robustness Properties, trained using PGD attacks}} \\ \hline
    DL2 Monotoncity & 98.9 & 3.0 & 97.9 & .99694 & .976 & \xmark & \xmark & \xmark & N/A & \xmark  \\ \hline
    DL2 Stability & 99.0 & 3.0 & 97.9 & .99706 & .977 & \xmark & \xmark & \xmark & N/A &  \xmark \\ \hline
    DL2 High Confidence & 99.5 & 4.6 & 97.2 & .99696 & .969 & \xmark & \xmark & \xmark & N/A &  \xmark \\ \hline
    DL2 Small Neighborhood & 99.1 & 3.0 & 97.9 & .99720 & .977 & \xmark & \xmark & \xmark & N/A & \xmark  \\ \hline
    \multicolumn{11}{l}{\textbf{Our Models with Global Robustness Properties}} \\ \hline
    Logic Ensemble Monotoncity & 96.3 & 3.5 & 96.4 & .98549 & .960 & \cmark & \xmark & \xmark & N/A & \xmark  \\ \hline
    Logic Ensemble Stability & 92.9 & 3.3 & 95.0 & .98180 & .943 & \xmark & \cmark & \xmark & N/A &  \cmark \\ \hline
    Logic Ensemble High Confidence & 97.6 & 5.4 & 95.9 & .98646 & .955 & \xmark & \cmark & \cmark & N/A & \xmark  \\ \hline
    Logic Ensemble Small Neighborhood & 97.1 & 2.8 & 97.1 & .99338 & .968 & \xmark & \xmark & \xmark & N/A & \cmark  \\ \hline
  \end{tabular}
  \caption{Results for training Twitter account classifier and Twitter spam URL classifier with global robustness properties, compared to baseline models. N/A: property not specified. $\cmark$: verified to satisfy the property. $\xmark$: verified to not satisfy the property. $?$: unknown.
  }
  \label{tab:twitter_results}
\end{table*}

\subsubsection{Experiment Setup.}
For the cryptojacking dataset, we boost 4 rounds, each adding a tree with
max depth 4. For the other two datasets, we boost 10 rounds, with max tree depth 5, except that we only boost 6 rounds when training the Twitter spam account classifier with all five properties.
During CLN training, we keep track of the discrete classifier at each stage, including all the inequalities and conjunctions. When we need to smooth the classifier, we use shifted and scaled sigmoid function to smooth the inequality, with temperature $\frac{1}{500}$, shift by $0.01$, and product t-norm to smooth the conjunctions, to closely approximate the discrete classifier. The updated weights from gradient-guided training can be directly used for the discrete classifier. To discretize the model, we simply do not apply the sigmoid function and the product t-norm.
We use the Adam optimizer with learning rate $0.001$
and decay $0.95$, to minimize binary cross-entropy loss using gradient descent.
For the crytpojacking dataset, we use mini-batch size 1; for
the other two larger datasets, we use mini-batch size 1024.
After boosting all the rounds, we choose the model with the highest validation AUC.
Empirically, our algorithm converges well to an accurate classifier that satisfies
the specified properties.

\subsubsection{Global Robustness Property Evaluation}
\label{sec:Global Robustness Property Evaluation}
We train 15 logic ensemble models in total for the three datasets,
each satisfying the specified global robustness properties, shown in Table~\ref{tab:cryptojacking_results} and Table~\ref{tab:twitter_results}.
We use our Integer Linear Program verifier (Section~\ref{sec:Integer Linear Program Verifier})
to verify the properties for all models.

\textbf{Training Overhead.} Similar to most existing robust machine learning training strategies, training a verifiably robust model is significantly slower than training a non-robust model. We show the median training time for Logic Ensemble models in Table~\ref{tab:training_overhead}. Training non-robust XGBoost models takes one minute. However, computation is usually cheap and the tradeoff for getting more robustness in exchange for more computation is common across robust machine learning techniques. Next, we discuss our key results.

\textbf{Result 4: Our monotonic models have comparable or better performance than
existing methods.} Our Logic Ensemble Monotonicity models have higher true positive rate and AUC than the Nonnegative Linear classifiers for all three datasets, and we also achieve better performance than
the Nonnegative Neural Network models
for the cryptojacking detection and Twitter account detection datasets.
Monotonic XGB outperforms our Logic Ensemble Monotonic models, but we still
have comparable performance. For example, for the Twitter spam account detection,
our Logic Ensemble Monotonicity model has $3.5\%$ lower true positive rate (TPR),
and $0.5\%$ higher false positive rate (FPR) than the Monotonic XGB model.

\textbf{Result 5: Our models have moderate performance drop
to obtain an individual property.}
For cryptojacking detection, enforcing each property does not
decrease TPR at all, and only increases FPR by $0.1\%$ compared to
the baseline neural network model (Table~\ref{tab:cryptojacking_results}).
For Twitter spam account detection,
logic ensemble models that satisfy one global robustness property decrease
the TPR by at most $3.8\%$, and increase the FPR by at most $0.9\%$,
compared to the baseline XGB model (Table~\ref{tab:twitter_results}).
For Twitter spam URL detection, within monotonicity, stability,
and small neighborhood properties, enforcing one property
for the classifier can maintain high TPR (from $92.9\%$ to $97.6\%$)
and low FPR (from $2.8\%$ to $5.4\%$, Table~\ref{tab:twitter_results}).
For example, the Logic Ensemble High Confidence model
decreases the TPR by $1.4\%$ and increases the FPR by $3.9\%$,
compared to the baseline XGB model.
This model utilizes the low-cost features
to improve the prediction accuracy. If we only use high-cost features
to train a tree ensemble with the same capacity (10 rounds of boosting),
we can only achieve $79.9\%$ TPR and $0.96075$ AUC. In comparison,
our Logic Ensemble High Confidence model has $97.6\%$ TPR and $0.98646$ AUC.
Results regarding hyperparameters are discussed in Appendix~\ref{appendix:Hyperparameters}.

\textbf{Result 6: Training a classifier with one property sometimes obtains another property.}
Table~\ref{tab:cryptojacking_results} shows that all cryptojacking
Logic Ensemble classifiers that were enforced with only one property,
have obtained at least one other property. For example,
the Logic Ensemble Stability model
has obtained small neighborhood property, and vice versa.
Since we specify all features to be stable for this dataset,
the stability property is equivalent to
the global Lipschitz property under $L_0$ distance. On the other hand,
we define the small neighborhood property with a new distance.
This shows that enforcing robustness for one property can generalize
the robustness to a different property.
More results are discussed in Appendix~\ref{appendix:Obtaining More Properties}.

\textbf{Result 7: We can train classifiers to satisfy multiple global robustness properties at the same time.}
We train a cryptojacking classifier with four properties, and a Twitter spam account classifier
with five properties. For cryptojacking detection, the Logic Ensemble Combined
model maintains the same high TPR, and only increases the FPR by $3\%$
compared to the baseline neural network model (Table~\ref{tab:cryptojacking_results}).
For Twitter spam account detection, the Logic Ensemble Combined model that satisfies all properties
only decreases the TPR by \socialtprdecrease{}
and increases the FPR by \socialfprincrease{},
compared to the baseline XGB model with no property (Table~\ref{tab:twitter_results}).
More results are discussed in Appendix~\ref{appendix:Logic Ensemble Combined Model}.

\section{Related Work}


\noindent\textbf{Program Synthesis.}
Solar-Lezama et al.~\cite{solar2006combinatorial} proposed counterexample guided inductive synthesis (CEGIS)
to synthesize finite programs according to specifications of desired functionalities.
The key idea is to iteratively generate a proposal of the program
and check the correctness of the program, where the checker should be able
to generate counterexamples of correctness to guide the program generation process.
The general idea of CEGIS has also been used to learn
recursive logic programs (e.g., as static analysis rules)~\cite{cropper2020turning,albarghouthi2017constraint,si2018syntax,raghothaman2019provenance}.
We design our fixer following the general form of CEGIS.




\noindent\textbf{Local Robustness.}
Many techniques have been proposed to verify local robustness (e.g., $\ell_p$ robustness) of neural networks,
including customized solvers~\cite{katz2017reluplex,tjeng2017evaluating,huang2017safety,katz2019marabou}
and bound propagation based verification methods~\cite{li2020sok,wong2018provable,raghunathan2018semidefinite,reluval2018,shiqi2018efficient,wang2021beta,xu2020fast,singh2019abstract,singh2019boosting,balunovic2019certifying,singh2019beyond,muller2021precise,muller2021scaling,zhang2018efficient,boopathy2019cnn,weng2018towards,salman2019convex}. Bound propagation verifiers can also be applied in robust optimization to train the models with certified local robustness~\cite{wong2018scaling,zhang2020towards,mirman2018differentiable,wang2018mixtrain,boopathy2021fast,li2019robustra,zhang2018cost,chen2019training}. Randomized smoothing~\cite{cohen2019certified,jia2019certified,lecuyer2019certified,salman2019provably,yang2020randomized,li2018certified} is another technique to provide probabilistic local robustness guarantee.
Several methods have been proposed to utilize the local Lipshitz constant of neural networks
for verification~\cite{weng2018towards,weng2018evaluating,hein2017formal}, and constrain or use the local Lipshitz bounds to train robust networks~\cite{szegedy2013intriguing,cisse2017parseval,cohen2019universal,anil2019sorting,pauli2021training,qin2019adversarial,finlay2021scaleable,lee2020lipschitz,gouk2021regularisation,singla2019bounding,farnia2018generalizable}.




\noindent\textbf{Global Robustness.}
Fischer et al.~\cite{fischer2018dl2} and Melacci et al.~\cite{melacci2020can} proposed global robustness properties for image classifiers using universally quantified statements.
Both of their techniques smooth the logic expression of the property into a differentiable loss function,
and then use PGD attacks~\cite{kurakin2016adversarial} to minimize the loss.
They can train neural networks to obtain local robustness,
but cannot obtain verified global robustness.
ART~\cite{lin2020art} proposed an abstraction refinement strategy to train provably correct neural networks.
The model satisfies global robustness properties when the correctness loss reaches zero.
However, in practice their correctness loss did not converge to zero.
Leino et al.~\cite{leino2021globally} proposed to minimize global Lipschitz constant to train globally-robust neural networks, but they can only verify one global property that abstains on non-robust predictions.


\noindent\textbf{Monotonic Classifiers.}
Many methods have been proposed to train monotonic classifiers~\cite{wehenkel2019unconstrained,incer2018adversarially,archer1993application,daniels2010monotone,kay2000estimating,ben1995monotonicity,duivesteijn2008nearest,feelders2010monotone,gupta2016monotonic}.
Recently, Wehenkel et al.~\cite{wehenkel2019unconstrained} proposed unconstrained monotonic neural networks,
based on the key idea that a function is monotonic as long as its derivative is nonnegative.
This has increased the performance of monotonic neural network significantly compared to enforcing nonnegative weights.
Incer et al.~\cite{incer2018adversarially} used monotone constraints from XGBoost
to train monotonic malware classifiers. XGBoost enforces monotone constraints
for the left child weight to be always smaller (or greater) than the right child,
which is a specialized method and does not generalize to other global robustness properties.

\noindent\textbf{Discrete Classifier and Smoothing.}
Friedman et al.~\cite{friedman2008predictive} proposed rule ensemble,
where the each rule is a path in the decision tree, and they
used regression to learn how to combine rules.
Our logic ensemble is more general such that the clauses
do not have to form a tree structure. We only take rules from trees
as the starting classifier to fix the properties.
Kantchelian et al.~\cite{kantchelian2016evasion} proposed the mixed integer linear program
attack to evade tree ensembles by perturbing a concrete input. In comparison,
our integer linear program verifier has only integer variables, and represents
all inputs symbolically. Continuous Logic Networks was proposed
to smooth SMT formulas to learn loop invariants~\cite{ryan2019cln2inv, yao2020learning}.
In this paper, we apply the smoothing techniques to
train machine learning classifiers.



\section{Conclusion}

In this paper, we have presented a novel booster-fixer training framework
to enforce new global robustness properties for security classifiers.
We have formally defined six global robustness properties,
of which five are new. Our training technique is general, and can 
handle a large class of properties. We have used experiments to show that
we can train different security classifiers to satisfy multiple
global robustness properties at the same time.

\section{Acknowledgements}

We thank the anonymous reviewers for their constructive and valuable feedback. This work is supported in part by NSF grants CNS-18-50725, CCF-21-24225; generous gifts from Open Philanthropy, two Google Faculty Fellowships, Berkeley Artificial Intelligence Research (BAIR), a Capital One Research Grant, a J.P. Morgan Faculty Award; and Institute of Information \& communications Technology Planning \& Evaluation (IITP) grant funded by the Korea government(MSIT) (No.2020-0-00153). Any opinions, findings, conclusions, or recommendations expressed herein are those of the authors, and do not necessarily reflect those of the US Government, NSF, Google, Capital One, J.P. Morgan, or the Korea government.




\balance
\bibliographystyle{ACM-Reference-Format}
\bibliography{ref}

\appendix
\begin{figure*}[ht!]
    \centering
    \includegraphics[width=\textwidth]{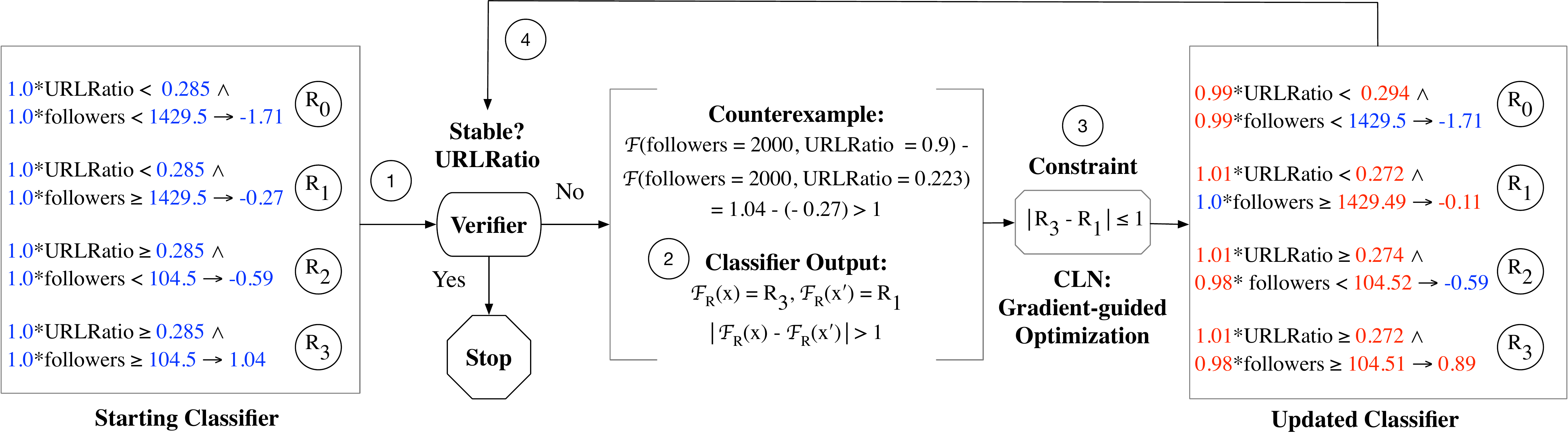}
    \caption{One CEGIS iteration to train stability property for the Twitter account classifier.
    We specify the classifier's output score to change at most by one when the URLRatio feature
    is arbitrarily perturbed. Multiple weights of the classifiers are updated by gradient-guided optimization,
    and the classifier after training no longer forms a tree structure.}
    \label{fig:stability_demo}
\end{figure*}

\section{Stability for Twitter Account Classifier}
\label{appendix:Stability for Twitter Account Classifier}

To classify Twitter accounts that broadcast spam URLs,
we can use the number of followers
and the ratio of posted URLs over total number of tweets as features~\cite{lee2011seven}.
It is hard for spammers to obtain large amount of followers, and
they are likely to post more URLs than benign users.
We specify the URLRatio feature to be stable, such that arbitrarily changing
the feature will not change the classifier's output by more than 1.

Figure~\ref{fig:stability_demo} shows one CEGIS iteration to train
the stability property. The starting classifier is a decision tree.
For example, ``$1.0*\texttt{URLRatio} < 0.285 \land 1.0*\texttt{followers} < 1429.5 \rightarrow -1.71$''
means that if the URLRatio and the number of followers both satisfy these inequalities,
the clause is true and returns $-1.71$, value of the variable $R_0$.
Otherwise, the clause returns $0$.
We take the sum of return values from all clauses to be the classification score.
One CEGIS iteration goes through the following four steps.

Step \textcircled{1}: We ask the verifier whether the URLRatio feature is stable.
If the verifier can verify the stability property,
we stop here. If not, the verifier generates a counterexample that violates the property.
Here, the counterexample shows that if the number of followers is 2000, and if the
URLRatio feature changes from $0.9$ to $0.223$, the classifier's output changes by 1.31,
which violates the stability property. 

Step \textcircled{2}:
Using the sum of true clauses for each input, we represent $x, x^\prime$
as $\mathcal{F}_R(x) = R_3$, and $\mathcal{F}_R(x^\prime) = R_1$.


Step \textcircled{3}: We construct the constraint to eliminate the counterexample. In this case,
we want the difference between the output for $x$
and the output for $x^\prime$ to be bounded by 1, i.e., $\abs{R_3 - R_1} \leq 1$.
Then, we smooth the classifier using CLN~\cite{ryan2019cln2inv, yao2020learning},
train the weights using projected gradient descent with the constraint.
After one epoch, we have updated the classifier in the
rightmost box of Figure~\ref{fig:stability_demo}.
The red weights of the model are updated by gradient descent.
Note that the classifier no longer follows a tree structure.

Lastly, we repeat this process until the classifier is verified to satisfy the property (Step \textcircled{4}).
In this example, the updated classifier still does not satisfy stability, and we will
go through more CEGIS iterations to update it.

\section{Proof}
\label{appendix:proof}

\begin{lemma}
If a classifier satisfies Property 3a, then it also satisfies Property 3.
\label{lemma1}
\end{lemma}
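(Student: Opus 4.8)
The plan is to prove the implication directly by chaining two inequalities, fixing an arbitrary pair $(x,x')$ that satisfies the hypotheses of Property~3 and deriving the conclusion $\mathcal{F}(x') \ge 0$. Concretely, I would assume $x_i = x'_i$ for all $i \notin J$ and $g(\mathcal{F}(x)) \ge \delta$, and aim to show $\mathcal{F}(x') \ge 0$ for this pair; since the pair is arbitrary, the universally quantified statement of Property~3 then follows.

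First I would invoke Property~3a on this same pair. Because $x$ and $x'$ agree on every coordinate outside $J$, the antecedent of Property~3a is met, so $\mathcal{F}(x) - \mathcal{F}(x') \le g^{-1}(\delta)$, which rearranges to $\mathcal{F}(x') \ge \mathcal{F}(x) - g^{-1}(\delta)$. Next I would convert the confidence hypothesis into a bound on the raw score: since the normalization function $g$ (e.g., sigmoid) is monotonically increasing, its inverse $g^{-1}$ is well-defined and also monotonically increasing, so applying $g^{-1}$ to $g(\mathcal{F}(x)) \ge \delta$ yields $\mathcal{F}(x) \ge g^{-1}(\delta)$. Combining the two inequalities gives $\mathcal{F}(x') \ge g^{-1}(\delta) - g^{-1}(\delta) = 0$, which is exactly the consequent of Property~3.

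The argument is essentially a two-line chain of inequalities, so I do not anticipate any genuine technical obstacle. The only step requiring care is the passage from $g(\mathcal{F}(x)) \ge \delta$ to $\mathcal{F}(x) \ge g^{-1}(\delta)$, which relies on $g$ being monotonically increasing and invertible; this is precisely the standing assumption on the normalization function introduced in Section~\ref{section:Problem Formulation}, and it is what makes the threshold $g^{-1}(\delta)$ appearing in Property~3a exactly the right bound to guarantee the high-confidence conclusion.
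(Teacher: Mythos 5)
Your proof is correct and follows essentially the same route as the paper's: apply Property~3a to the pair to get $\mathcal{F}(x) - \mathcal{F}(x') \le g^{-1}(\delta)$, use monotonicity of $g$ to turn $g(\mathcal{F}(x)) \ge \delta$ into $\mathcal{F}(x) \ge g^{-1}(\delta)$, and chain the two inequalities to conclude $\mathcal{F}(x') \ge 0$. Your version is slightly more careful than the paper's in that it explicitly flags the invertibility and monotonicity of $g$ as the justification for the threshold conversion, but the argument is the same.
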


\begin{proof}
$\forall x,x' \in \mathbb{R}^n . [\forall i \notin J . x_i = x'_i] \land g(\mathcal{F}(x)) \ge \delta$,
we have $\mathcal{F}(x) \ge g^{-1}(\delta)$.
Since $\mathcal{F}$ satisfies Property 3a, then
we also have
$\mathcal{F}(x) - \mathcal{F}(x') \le g^{-1}(\delta)$.
Therefore, $\mathcal{F}(x') \ge \mathcal{F}(x) - g^{-1}(\delta) \ge 0$.

\end{proof}

\section{Measurement Results}
\label{appendix:Measurement Results}

\subsection{LenScreenName Feature}
\label{appendix:LenScreenName Feature for Twitter Spam Account Dataset}
\begin{figure}[h!]
    \centering
    \includegraphics[width=0.6\columnwidth]{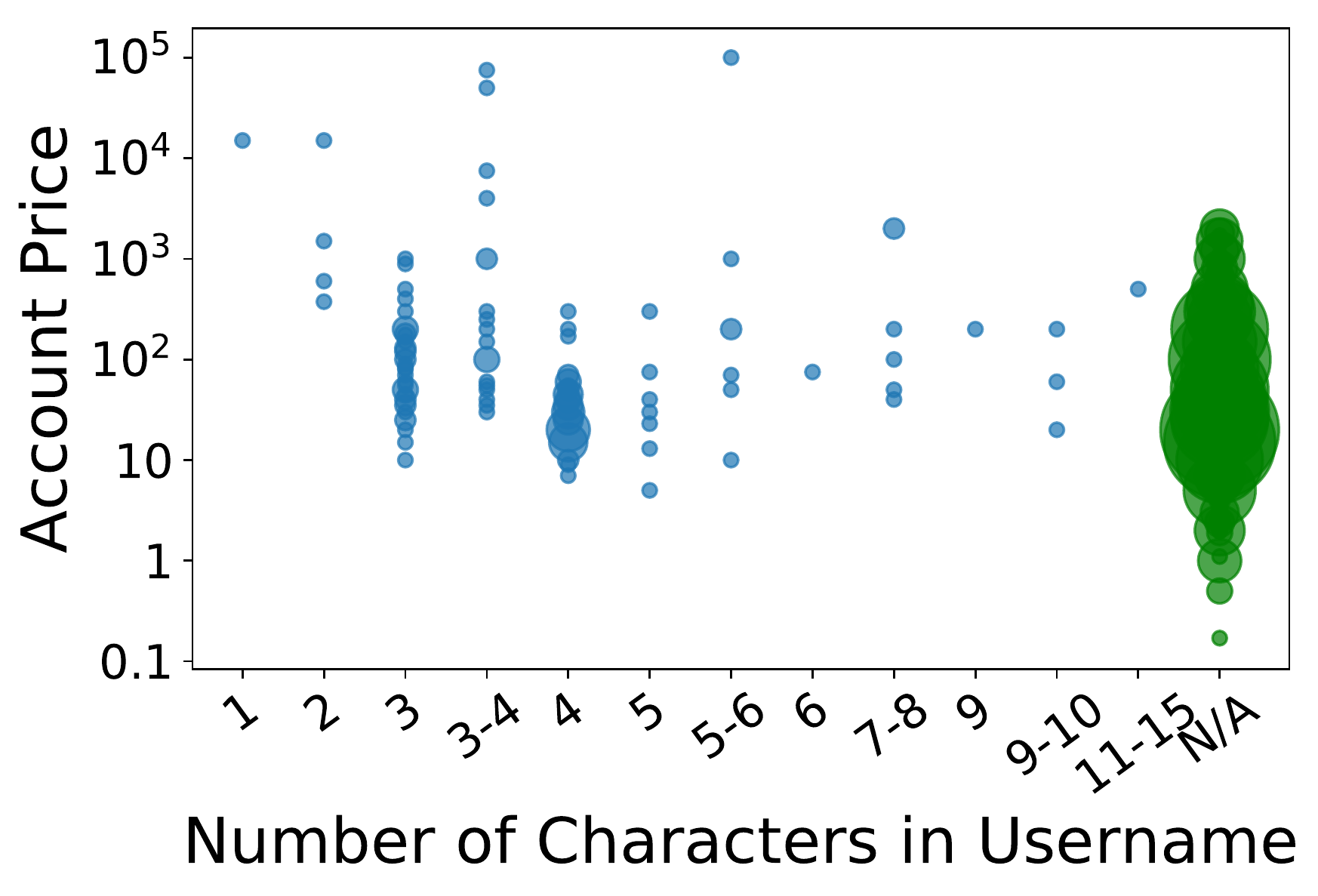}
    \caption{Price (\$, log scale) of Twitter accounts with different number of characters.}
    \label{fig:charprice_plot}
\end{figure}

We measure the economic cost for attackers to perturb
the LenScreenName feature from the Twitter spam account dataset.
We extract the number of characters information from 6,125 for-sale Twitter account posts, and measure the price for accounts with different username length. 
or unspecificed characters.
In Figure~\ref{fig:charprice_plot},
we plot the price of accounts according
to the username length. If the post says ``3 or 4 characters",
we plot the price under ``3-4" category. The majority of
accounts are under ``N/A'' category, where the sellers do not
mention the length of username, but emphasize other attributes
such as number of followers. Overall, if the username
length has at most 4 characters, it affects the account
price more than longer username.

\subsection{CCSize Feature}
\label{appendix:CCSize Feature for Twitter Spam URL Dataset}

\begin{figure}[h!]
    \centering
    \includegraphics[width=0.6\columnwidth]{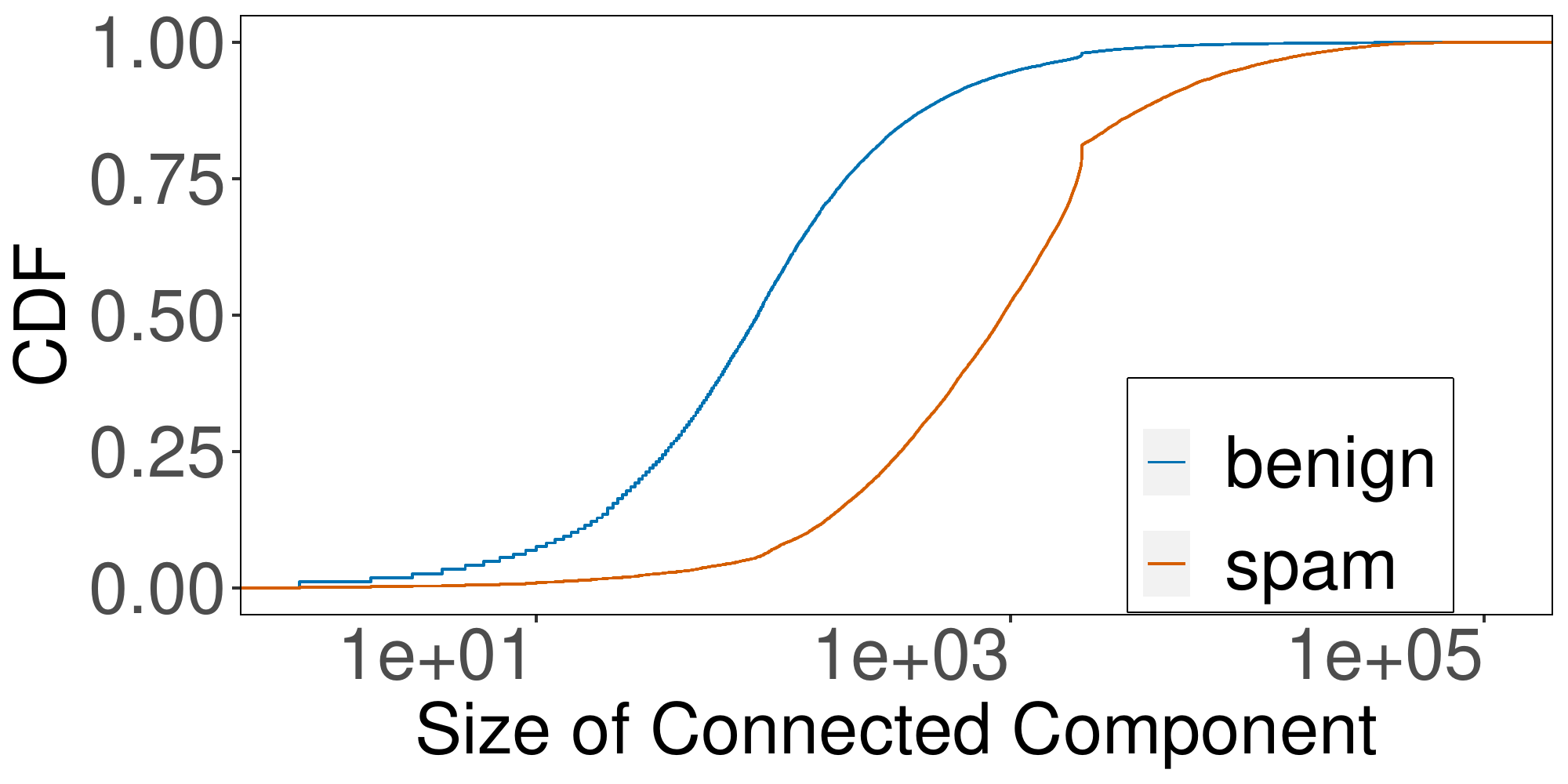}
    \caption{CDF of
    \# of IPs in a connected component containing
    a given URL. Spam URLs tend to be in larger
    connected components.}
    \label{fig:cdf_ccsize}
\end{figure}

We measure the distribution of CCSize feature from
the Twitter spam URL dataset.
The CCSize feature counts
the number of IP nodes in the connected component of the posted URL.
Since spammers reuse redirectors, their URLs often belong to the same large
connected component and result in a larger CCSize feature value,
compared to benign URLs,
as shown in Figure~\ref{fig:cdf_ccsize}. A larger CCSize value
indicates that more resources are being reused, and the initial URL is more suspicious. Therefore, we specify CCSize feature
to be monotonically increasing.

\begin{table}[t!]
  \centering
  \begin{tabular}{l | c | c}
    \hline
    \textbf{Property} & \textbf{Training Constraints} & \textbf{Robustness} \\
    \hline \hline
    Stability & Smaller $c_\text{stability}$ & Stronger \\ \hline
    High Confidence & Smaller $\delta$ & Stronger \\ \hline
    Redundancy & Smaller $\delta$ & Stronger \\ \hline
    Small Neighborhood & Smaller $c$ given fixed $\epsilon$ & Stronger \\ \hline
  \end{tabular}
  \caption{Robustness controlled by hyperparameters.}
  \label{tab:params}
\end{table}

\begin{table}[t!]
  \centering
  \begin{tabular}{c | c | c}
    \hline
    \textbf{Stable Constant $c_\text{stability}$} & \textbf{TPR} & \textbf{FPR} \\
    \hline \hline
    8 & 0.86 & 0.021 \\ \hline
    4 & 0.835 & 0.031 \\ \hline
    2 & 0.833 & 0.029 \\ \hline
  \end{tabular}
  \caption{TPR and FPR of Twitter spam account classifiers trained with the stability property.}
  \label{tab:twitter_stable}
\end{table}

\section{Classification Features}
\label{appendix:Classification Features}

Table~\ref{tab:features} lists all the features
for detecting cryptojacking, Twitter spam accounts,
and Twitter spam URLs.

\begin{table*}[!bt]
	\centering
	\small
	\begin{tabular}{c | c | l | c | c}
		\hline
		\textbf{Dataset} & \textbf{Feature Name} & \textbf{Description} & \textbf{Monotonic} & \textbf{Low-cost} \\ \hline\hline
		\multirow{7}{*}{Cryptojacking} & websocket & Use WebSocket APIs for network communication &  Increasing &   \\ \cline{2-5}
		& wasm & Uses WebAssembly to execute code in browsers are near native speed & Increasing &  \\ \cline{2-5}
		& hash function & Use one of the hash functions on a curated list & Increasing & yes \\ \cline{2-5}
		& webworkers & The number of web workers threads for running concurrent tasks & Increasing &  \\ \cline{2-5}
		& messageloop load & The number of MessageLoop events for thread management & Increasing &  \\ \cline{2-5}
		& postmessage load & The number of PostMessage events for thread job reporting & Increasing &  \\ \cline{2-5}
		& parallel functions & Run the same tasks in multiple threads  & Increasing &  \\ \hline \hline
		
		\multirow{15}{*}{\begin{tabular}[c]{@{}c@{}}Twitter Spam \\ Accounts \end{tabular}}
		& LenScreenName & The number of characters in the account user name & Increasing & yes ($\geq 5$ char)  \\ \cline{2-5}
		& LenProfileDescription & The number of characters in the profile description & & yes  \\ \cline{2-5}
		& AgeDays & The age of the account in days & Decreasing &  \\ \cline{2-5}
		& NumFollowings & The number of other users an account follows & Increasing &  \\ \cline{2-5}
		& NumFollowers & The number of followers for an account & Decreasing &  \\ \cline{2-5}
		& Ratio\_Following\_Followers & The ratio of NumFollowings divided by NumFollowers & &  \\ \cline{2-5}
		& StdFollowing & Standard deviation of NumFollowings over different days & & \\ \cline{2-5}
		& ChangeRateFollowing & The averaged difference for NumFollowings between consecutive days & & \\ \cline{2-5}
		& NumTweets & Total number of tweets over seven months &  & yes \\ \cline{2-5}
		& NumDailyTweets & Average number of daily tweets &  & yes \\ \cline{2-5}
		& TweetLinkRatio & Ratio of tweets containing links over total number of tweets & Increasing & yes \\ \cline{2-5}
		& TweetUniqLinkRatio & Ratio of tweets containing unique links over total number of tweets & Increasing & yes  \\ \cline{2-5}
		& TweetAtRatio & Ratio of tweets containing `@' over total number of tweets & & yes \\ \cline{2-5}
		& TweetUniqAtRatio & Ratio of tweets with unique `@' username over total number of tweets & & yes  \\ \cline{2-5}
		& PairwiseTweetSimilarity & Normalized avg num of common chars in pairwise tweets for a user & &  \\ \hline \hline
		 
		& \multicolumn{4}{l}{Shared Resources-driven} \\ \cline{2-5}
		\multirow{29}{*}{\begin{tabular}[c]{@{}c@{}}Twitter Spam \\ URLs \end{tabular}}
		& EntryURLid & In degree of the largest redirector in the connected component & Increasing & \\ \cline{2-5}
		& AvgURLid & Average in degree of URL nodes in the redirection chain & Increasing & \\ \cline{2-5}
		& ChainWeight & Total frequency of edges in the redirection chain & Increasing & \\ \cline{2-5}
		& CCsize & Number of nodes in the connected component & Increasing & \\ \cline{2-5}
		& CCdensity & Edge density of the connected component & & \\ \cline{2-5}
		& MinRCLen & Minimum length of the redirection chains in the connected component & Increasing & \\ \cline{2-5}
		& AvgLdURLDom & Avg \# of domains for landing URL IPs in the connected component & Increasing & \\ \cline{2-5}
		& AvgURLDom & Average \# of domains for the IPs in the redirection chain & Increasing & \\ \cline{2-5}
		& \multicolumn{4}{l}{Heterogeneity-driven} \\ \cline{2-5}
		& GeoDist & Total geographical distance (km) traversed by the redirection chain & & \\ \cline{2-5}
		& CntContinent & Number of unique continents in the redirection chain & & \\ \cline{2-5}
		& CntCountry & Number of unique countries in the redirection chain  & & \\ \cline{2-5}
		& CntIP & Number of unique IPs in the redirection chain & & \\ \cline{2-5}
		& CntDomain & Number of unique domains in the redirection chain & & \\ \cline{2-5}
		& CntTLD & Number of unique top-level domains in the redirection chain & & \\ \cline{2-5}
		& \multicolumn{4}{l}{Flexibility-driven} \\ \cline{2-5}
		& ChainLen & Length of the redirection chain & & \\ \cline{2-5}
		& EntryURLDist & Distance from the initial URL to the largest redirector & & \\ \cline{2-5}
		& CntInitURL & Number of initial URLs in the connected component & & \\ \cline{2-5}
		& CntInitURLDom & Total domain name number in the initial URLs & & \\ \cline{2-5}
		& CntLdURL & Number of final landing URLs in the redirection chain & & \\ \cline{2-5}
		& AvgIPperURL & Average IP number per URL in the connected component & & \\ \cline{2-5}
		& AvgIPperLdURL & Average IP number per landing URL in the connected component & & \\ \cline{2-5}
		& \multicolumn{4}{l}{Tweet Content} \\ \cline{2-5}
		& Mention Count & Number of ‘@’  that mentions other users & & yes \\ \cline{2-5}
		& Hashtag Count & Number of hashtags & & yes \\ \cline{2-5}
		& Tweet Count & Number of tweets made by the user account for this tweet & & yes \\ \cline{2-5}
		& URL Percent & Percentage of posts from the same user that contain a URL & & yes \\ \hline
	\end{tabular} 
	\caption{Classification features for three datasets. For each feature, we also mark the monotonic direction if we specify the monotonicity property, and whether we specify the feature to be low cost.}
	\label{tab:features}
\end{table*}

\section{Heuristics}
\label{appendix:Heuristics}

The time to solve for counterexamples is the bottleneck in training.
Therefore, we implement the following heuristics to improve
the training efficiency:
\begin{itemize}[leftmargin=*]
\item We exponentially increase the time out for the solver, starting
from 30s. If we find at least one counterexample
within each CEGIS iteration, we add the constraint(s) to eliminate
the counterexample(s) and proceed with CLN training with the constraints.
We increase the timeout if the solver could not find any counterexample
fast enough in an iteration, and if
it also could not verify that the classifier satisfies all the specified properties.
\item We implement property boosting as an option to train monotonicity and stability. Property boosting means that we only train the property
for the newly added sub-classifier, and keep the previous sub-classifiers
fixed. This works for properties that can be satisfied if every sub-classifiers
also satisfy the sub-properties, since our ensemble is a sum ensemble.
If every sub-classifier is monotonic for a given
feature, the ensemble classifier is also monotonic. Similarly, if every sub-classifier is stable for a given feature by a stable constant $\frac{c}{B}$,
the ensemble classifier is stable under stable constant $c$.
\item We use feature scheduling to train
the high confidence property. Specifically, to run 10 rounds of boosting
for either Twitter spam account or Twitter spam URL detection classifiers,
we first boost 6 decision trees as the base model without any low-cost features.
This makes sure that the base model naturally satisfies the high confidence
property. Then, for the remaining 4 rounds, we use all features to boost
new trees and fix the properties for the entire classifier. 
\item When training all the five properties (monotonicity, stability, high confidence, redundancy and small neighborhood) for the Twitter spam account detection,
we use the following property scheduling to boost 6 rounds.
For the first round, we use features that don't involve any property
to construct a base classifier, so it naturally satisfies all properties.
In the 2nd and 3rd round, we use all features excluding low-cost ones,
so we get high confidence and redundancy for free for these rounds.
In the next two rounds, we use all features excluding monotonic ones,
so we get monotonicity for free for these rounds. In the last round,
we use all features and fix all five properties.
\end{itemize}

Property boosting, feature scheduling, and property scheduling
reduce the size of the integer linear program, which makes it easier to be solved.

%

\section{Hyperparameters}
\label{appendix:Hyperparameters}

Enforcing stronger robustness decreases true positive rate. The hyperparameters control this tradeoff.
In particular, Table~\ref{tab:params} shows how the strength of robustness changes
as different hyperparameters change for all proposed properties except monotonicity (we don't have such
a hyperparameter for monotonicity). For example, to demonstrate the tradeoff,
we trained three Twitter spam account classifiers with the stability property,
where each one has a different stable constant. Table~\ref{tab:twitter_stable}
shows that training with a smaller stable constant $c_\text{stability}$ gives
us a verifiably robust model with stronger robustness but lower true positive rate.

\section{Obtaining More Properties}
\label{appendix:Obtaining More Properties}

Table~\ref{tab:twitter_results} shows that training a classifier with one property sometimes obtains another property.
For the Twitter spam account detection classifiers,
enforcing one of the high confidence, redundancy,
and small neighborhood properties can obtain at least
a second property. For example, the Logic Ensemble Redundancy model
has obtained stability and high confidence properties.
Since we use the same set of low-cost features
to define the high confidence and redundancy properties,
the redundancy property is strictly stronger than
the high confidence property. In other words,
if the attacker have to perturb one low-cost feature
from at least two different groups to evade the classifier (redundancy),
they cannot evade the classifier by perturbing only one low-cost feature (high confidence).
For the largest Twitter spam URL detection dataset,
the Logic Ensemble Stability model also satisfies the small neighborhood
property, and the Logic Ensemble High confidence model also satisfies
the stability property.

\section{Logic Ensemble Combined Model}
\label{appendix:Logic Ensemble Combined Model}

Table~\ref{tab:twitter_results} shows that,
for Twitter spam account detection,
the Logic Ensemble Combined with all five properties
has higher AUC than the
Logic Ensemble Monotonicity model trained with only one property.
This is because we use property scheduling for Logic Ensemble Combined (Appendix~\ref{appendix:Heuristics}),
such that for each round before the last round, our classifier satisfies some properties for free.
We could improve the performance of the Logic Ensemble Monotonicity
by similar feature scheduling technique, such as boosting first four
rounds of model with non-monotonic features first, and then train
with all features for later rounds.

\end{document}